\definecolor{Darkblue}{rgb}{0,0,0.4}
\definecolor{Brown}{cmyk}{0,0.81,1.,0.60}
\definecolor{Purple}{cmyk}{0.45,0.86,0,0}
\newcommand{\lref}[2][]{\hyperref[#2]{#1~\ref*{#2}}}
\newtheorem{theorem}{Theorem}
\newtheorem{proposition}[theorem]{Proposition}
\newtheorem{corollary}[theorem]{Corollary}
\newtheorem{fact}[theorem]{Fact}
\newtheorem{observation}[theorem]{Observation}
\newcommand{\NN}{\ensuremath{\mathbb{N}}}
\newcommand{\RR}{\ensuremath{\mathbb{R}}}
\renewcommand{\O}{{\mathcal O}}
\newcommand{\R}{{\mathcal R}}
\renewcommand{\L}{{\mathcal L}}
\renewcommand{\Pr}[1]{\mbox{\rm\bf Pr}\left[#1\right]}
\newcommand{\Ex}[1]{\mbox{\rm\bf E}\left[#1\right]}
\newcommand{\noise}{\ensuremath{N}}
\newcommand{\measure}{\ensuremath{W}}
\DeclareMathOperator*{\ALG}{ALG}
\DeclareMathOperator*{\polylog}{polylog}
\author{Thomas Kesselheim}
\title{A Constant-Factor Approximation for Wireless Capacity Maximization with Power Control in the SINR Model%
  \thanks{Department of Computer Science, RWTH Aachen University,
    Germany. \texttt{thomask@cs.rwth-aachen.de}. This work
    has been supported by the UMIC Research Centre, RWTH Aachen
    University.}}
\begin{document}

\maketitle

\begin{abstract}\noindent
In modern wireless networks, devices are able to set the power for each transmission carried out. Experimental but also theoretical results indicate that such power control can improve the network capacity significantly. We study this problem in the physical interference model using SINR constraints.

In the \emph{SINR capacity maximization problem}, we are given $n$ pairs of senders and receivers, located in a metric space (usually a so-called fading metric). The algorithm shall select a subset of these pairs and choose a power level for each of them with the objective of maximizing the number of simultaneous communications. This is, the selected pairs have to satisfy the SINR constraints with respect to the chosen powers.

We present the first algorithm achieving a constant-factor approximation in fading metrics. The best previous results depend on further network parameters such as the ratio of the maximum and the minimum distance between a sender and its receiver. Expressed only in terms of $n$, they are (trivial) $\Omega(n)$ approximations.

Our algorithm still achieves an $\O(\log n)$ approximation if we only assume to have a general metric space rather than a fading metric. Furthermore, by using standard techniques the algorithm can also be used in single-hop and multi-hop scheduling scenarios. Here, we also get $\polylog n$ approximations. 
\end{abstract}
\vfill

\thispagestyle{empty}
\setcounter{page}0
\pagebreak[4]

\section{Introduction}
Wireless communication networks become more and more important these days -- for example as a cellular network, as a wireless LAN, or as sensor networks. All these networks have in common simultaneous communications are not physically separated but interfere. Therefore one uses different time slots or different channels. A promising technology to better utilize time slots or channels is so-called power control enabling each device to set its transmission power for each transmission. So the fundamental questions arising are not only when to transmit on which channel but also at which power level. 

In order to model the aspects of power control appropriately, the standard model in the engineering community is the so-called physical interference model (see e.\,g., \cite{Zander1992,ElBatt2004,Brar2006}). Here, a transmission can successfully be decoded if the SINR constraint is fulfilled. This is the ratio between the desired signal strength and all other signal strengths plus ambient noise (Signal-to-Interference-plus-Noise Ratio, SINR) is above some threshold $\beta$. Over the past years, this model also attracted much interested in algorithmic research (see e.\,g. \cite{Moscibroda2006,Halldorsson2009,Fanghaenel2009,Halldorsson2009a,Andrews2009}) because also theoretically power control has an enormous impact on the capacity of a wireless network by effects not captured by graph-based models \cite{Moscibroda2006b}.

The fundamental optimization problem considered is the capacity maximization problem. Here, one is given a set of $n$ pairs of senders and receivers located in a metric space. The objective is to find a maximum set of these requests and a power assignment such that the SINR constraint is fulfilled. Previous algorithms typically use transmission powers chosen as a function of the distance between the sender and its receiver. The best result in this area \cite{Halldorsson2010} achieves an approximation factor of $\O(\log \log \Delta + \log n)$. Here $\Delta$ is the ratio between the largest and the smallest distance between a sender and its receiver. However, this bound is tight and indeed its performance degenerates to $\Omega(n)$ for large $\Delta$. Since such a performance is also achieved by trivial algorithms it has been one of the main open problems to find algorithms with a non-trivial worst-case guarantee in terms of $n$ (see, e.\,g., \cite{Moscibroda2006,Andrews2009,Halldorsson2009a}). In this paper, we solve this question by presenting a surprisingly simple algorithm that achieves constant factor approximation in the case of a so-called fading metric. For general metrics, the same algorithm achieves an $\O(\log n)$ approximation.

\subsection{Formal Description of the Capacity Maximization Problem}    
Formally, the physical interference model is defined as follows. As stated above, the set of network nodes $V$ forms a metric space. This allows modelling the propagation of a radio signal as follows. Suppose a node $s$ transmits a signal at a power level of $p$, then some node $r$ receives it at a signal strength of $p / d(s, r)^\alpha$, where $d(s,r)$ is the distance between the two nodes. The constant $\alpha$ is called \emph{path-loss exponent}\footnote{Typically, it is assumed that $2 < \alpha < 6$. Our algorithms work for any $\alpha > 0$. However, the approximation factors improve when restricting $\alpha$.}. Such a transmission is successful if the strength of the desired signal is at least a factor $\beta$ higher than the strength of simultaneous transmissions plus ambient noise. This is the \emph{Signal-to-Interference-plus-Noise Ratio} is at least above some threshold $\beta > 0$, which is called \emph{gain}.
More formally, a set $\L \subseteq V \times V$ of pairs of senders and receivers (links) can communicate simultaneously on the same channel using the power assignment $p\colon \L \to \RR_{>0}$, if for all $\ell \in \L$ the SINR constraint
\[
\frac{p(\ell)}{d(s, r)^\alpha} \geq \beta \sum_{\ell' = (s', r') \in \L} \frac{p(\ell')}{d(s', r)^\alpha} + \noise
\]
is fulfilled. The constant $\noise \geq 0$ allows us to model ambient noise that all transmissions have to cope with. 

In fact, we consider a slight generalization of the capacity maximization problem. In the \emph{$k$ channel capacity maximization problem}, we are given a request set $\R \subseteq V \times V$ consisting of links. The objective is to find $k$ disjoint sets $\L_1, \ldots, \L_k \subseteq \R$ (corresponding to different time slots or non-interfering channels) of joint maximum cardinality and a power assignment such that the SINR constraint is fulfilled for each set $\L_t$. In contrast, in the \emph{interference scheduling problem} the objective is not to schedule as many requests as possible, but all of them minimizing $k$, the number of sets used.

One of the main challenges of both of these NP-hard problems is they are not convex. In particular, we have to select requests such that we get admissible sets, i.\,e. sets for which there is a correct power assignment. Deciding whether a set is admissible and finding the right power assignment can be done in polynomial time by solving a linear program. Nevertheless, the difficulty is still to find the admissible sets. There is no known way to use the LP approach in this context. Our approach is now to select the sets based on a sufficient condition. This condition is well chosen such that any admissible set has to fulfill a similar condition. Therefore, we can show we do not lose too many requests in this selection. 

We assume that both the optimum and also the calculated solution can select an arbitrary transmission power for each link. This common assumption is reasonable at least in theory because if for example the ratio between the maximum and the minimum transmission power is constant, the effect of power control on the solution quality is also only constant. However, of course, the case of limited transmission powers or limited overall energy could still be interesting for future work.

Another common assumption in theory and practice is the fact that nodes are located in the plane and that $\alpha > 2$. The advantage is the summed up interference by infinitely many signals sent out from equidistant senders at constant powers converges in this case. Halld\'orsson \cite{Halldorsson2009a} generalized this observation to so-called \emph{fading metrics}. Here, the doubling constant of the metric has to be strictly less than $\alpha$. This assumption often enables to build simpler algorithms or to get better performance guarantees. In particular, our algorithms do not depend on the assumption of a fading metric to work. However, we can prove better approximation ratios in this case.

\subsection{Our Results}
We present the first constant-factor approximation algorithm for the $k$ channel capacity maximization problem in fading metrics, particularly for the standard case of the Euclidean plane with $\alpha > 2$. Unlike previous theoretical approaches, the algorithm first selects the links and chooses transmission powers afterwards. The scheduling part is a simple greedy algorithm ensuring that a sufficient condition is met allowing the power control part to work. In order to analyze the quality of the solution, we find out that any admissible set has to fulfill a condition similar to the one of our greedy algorithm. This yields the mentioned approximation factor. 

For the case of arbitrary metrics the same algorithm still achieves a performance ratio of $\O(\log n)$. By applying it repeatedly, it can also be used to solve the interference scheduling problem. In this case, the approximation factors are $\O(\log n)$ in fading metrics resp.\ $\O(\log^2 n)$ in general metrics. Furthermore, the achieved necessary and sufficient conditions are linear allowing us to apply the algorithm in multi-hop variants. By using the technique of random delays, we get an $\O(\log^2 n)$ approximation for the multi-hop case with fixed paths. The optimization problem of finding optimal paths can be formulated as an integer linear program. By applying LP relaxation and randomized rounding, we achieve a total approximation factor of $\O(\log^2 n)$ resp.\ $\O(\log^3 n)$. Also for these scheduling problems, we achieve approximation factors that are significantly better than the ones known before. 

\subsection{Related Work}
\label{sec:related-work}
There has been lots of progress on scheduling and power control in the physical interference model in the last years. On the one hand, many heuristics have been presented to either solve the power control or the scheduling part or both at a time (see e.\,g., \cite{Zander1992,ElBatt2004,Brar2006}). A typical approach for power control is a fixed point iteration where in each step the power is reduced to the minimum one necessary to still satisfy the SINR constraint. This iteration converges as long as there is a feasible power assignment. To combine the approach with the scheduling part, one uses sufficient conditions that imply that a fixed point exist. However, these approaches lack of a polynomial running time and/or a guaranteed approximation factor. Moscibroda et al.~\cite{Moscibroda2007} discussed these issues and proved a bad worst-case performance for a number of heuristics.

%The first theoretical result on worst-case performance has been provided by Moscibroda and Wattenhofer \cite{Moscibroda2006}. However, they do not present a general algorithm but only one to schedule a link set ensuring strong connectivity between all network nodes located in the plane. Their indeed very complicated algorithm computes a schedule of $\O(\log^4 n)$ time slots for this problem. This result has been improved on several times. Up to now the best result is $\O(\log^2 n)$ time steps \cite{}. The algorithm by Moscibroda and Wattenhofer was also simplified and generalized to arbitrary request sets \cite{Moscibroda2006a}. However, the analysis does not compare to the optimal schedule length and unfortunately there is a gap of $\Omega(n)$.

After Moscibroda and Wattenhofer \cite{Moscibroda2006} had given the first worst-case results for a related problem (not dealing with arbitrary request sets), the theoretical analysis of the capacity maximization and the interference scheduling problem was considered by a number of authors independently \cite{Goussevskaia2007,Andrews2009,Chafekar2007,Fanghaenel2009}. However, the first approaches featuring a worst-case approximation guarantee work in quite a similar way. The request set $\R$ is divided into $\log \Delta$ length classes, where $\Delta$ is the ratio between the length of the longest and the shortest link in $\R$. In such a length class, the problem becomes significantly easier because one can use uniform transmission powers for all links. For example, Andrews and Dinitz \cite{Andrews2009} achieve an $\O(\log \Delta)$ approximation for the capacity maximization problem with $k=1$ in the plane. They furthermore prove the problem to be NP hard. Chafekar et al.~\cite{Chafekar2007} use similar ideas to give an approximation algorithm for the combined multi-hop scheduling and routing problem. Here, transmission powers $p(\ell)$ are chosen proportional to $d(\ell)^\alpha$ (linear power assignment) to get an $\O(\log^2 \Delta \cdot \log^2 \Gamma \cdot \log^2 n)$ approximation algorithm for the plane, where $\Gamma$ is the ratio of the maximum and minimum power used by the optimum. This was improved to $\O(\log \Delta \cdot \log^2 n)$ by Fangh\"anel et al.~\cite{Fanghaenel2009} and also extended to $\O(\log \Delta \cdot \log^3 n)$ in general metrics (without fading metrics assumption).

Nevertheless, already Moscibroda and Wattenhofer \cite{Moscibroda2006} show the performance of both uniform and linear power assignments can be an $\Omega(\log \Delta)$ factor worse than the optimal power assignment.  Square-root power assignments have a significantly better performance by setting the transmission power $p(\ell)$ proportional to $\sqrt{ d(\ell)^\alpha }$. They have been introduced by Fangh\"anel et al.~\cite{Fanghaenel2009a}. Halld\'orsson \cite{Halldorsson2009a} proved the optimal schedule length is at most an $\O(\log \log \Delta \cdot \log n)$ factor worse than the optimal power assignment in fading metrics. 

However, Fangh\"anel et al.~\cite{Fanghaenel2009a} not only prove that an $\Omega(\log \log \Delta)$ bound holds for square-root power assignments but also one cannot achieve reasonable approximation ratios in general by choosing transmission powers only based on the distance between the sender and its receiver. In particular, for each such oblivious power assignment there is an instance in which the maximum set fixed to this power assignment is an $\Omega(n)$ factor smaller than the optimal solution. In other words, there is an instance (where $\Delta$ is large) such that we only get a trivial approximation. Interestingly, this counterexample does not need a complex metric but nodes are located on a line. 

For problem variants, previous results are significantly better, e.\,g., when the transmission powers are also fixed for the optimum. For the case that all transmission powers are the same, Goussevskaia et al.~\cite{Goussevskaia2009} give a constant factor approximation for capacity maximization and this way an $\O(\log n)$ approximation for scheduling. For the more general case where transmission powers only have to follow certain monotonicity constraints, Kesselheim and V\"ocking \cite{Kesselheim2010} present a distributed protocol achieving an $\O(\log^2 n)$ approximation for the interference scheduling problem. Another example to be mentioned is the case of bidirectional communication. Here, when restricting the optimum to use symmetric transmission powers for both directions, one can achieve $\polylog n$ approximations using square-root power assignments \cite{Fanghaenel2009a,Halldorsson2009a}.
\section{Admissible Sets in Fading Metrics}
Before coming to the algorithm, let us first focus on necessary conditions an admissible set has to fulfill. In contrast to typical graph-based interference models, due to power control a receiver may be located closer to arbitrarily many different senders than the one whose signal it intends to receive. However, one can easily see this requires the distance between these other senders and their respective receiver to be still smaller than the distance to the first receiver. On the other hand, there can only be constantly many senders of larger links within this range. In this section, we prove a generalization of this necessary condition on the distances between nodes in admissible sets. In particular, we prove the following result.
\begin{theorem}
\label{theorem:lowerbound}
Let $\L$ be an admissible set. Let furthermore $s$ and $r$ be some arbitrary nodes in $V$. Then we have
\[
\sum_{\substack{(s', r') \in \L \\ d(s', r') \geq d(s, r)}} \min \left\{ 1, \frac{d(s, r)^\alpha}{d(s, r')^\alpha} \right\} + \min \left\{ 1, \frac{d(s, r)^\alpha}{d(s', r)^\alpha} \right\} = \O(1)
\]
in \emph{fading metrics}.
\end{theorem}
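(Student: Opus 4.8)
The plan is to fix the two nodes $s$ and $r$, write $D = d(s,r)$, and first discard the noise term $\noise$ from every SINR constraint: since $\noise \geq 0$, every admissible set satisfies the noise‑free constraints, so any distance/counting inequality derived from them is a valid necessary condition. Next I would split the claimed sum into the two families of summands and bound each by $\O(1)$ separately, i.e.\ show that each of
\[
\Sigma_1 := \sum_{\substack{(s',r')\in\L\\ d(s',r')\ge D}} \min\left\{1,\frac{D^\alpha}{d(s,r')^\alpha}\right\}
\qquad\text{and}\qquad
\Sigma_2 := \sum_{\substack{(s',r')\in\L\\ d(s',r')\ge D}} \min\left\{1,\frac{D^\alpha}{d(s',r)^\alpha}\right\}
\]
is $\O(1)$. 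The two are symmetric under swapping the roles of senders and receivers (and of $s$ and $r$), so I would carry out the argument for $\Sigma_2$ in detail and only remark that $\Sigma_1$ is handled identically.

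The heart of the argument is a purely combinatorial counting lemma: \emph{for every node $v$ and every radius $\rho \le D/2$, at most $1 + 2^\alpha/\beta$ links $(s',r')\in\L$ with $d(s',r')\ge D$ can have $s'\in B(v,\rho)$}, and symmetrically for the receiver. To see this, let $(s'_1,r'_1),\dots,(s'_M,r'_M)$ be such links with $s'_1,\dots,s'_M\in B(v,\rho)$. For $k\neq l$ we have $d(s'_l,s'_k)\le 2\rho\le D\le d(s'_k,r'_k)$, hence $d(s'_l,r'_k)\le d(s'_l,s'_k)+d(s'_k,r'_k)\le 2\,d(s'_k,r'_k)$. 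Substituting this into the noise‑free SINR constraint at $r'_k$ and multiplying through by $d(s'_k,r'_k)^\alpha$ yields $p(\ell'_k)\ge \tfrac{\beta}{2^\alpha}\sum_{l\neq k}p(\ell'_l)$; summing these $M$ inequalities and cancelling $\sum_l p(\ell'_l)>0$ gives $M-1\le 2^\alpha/\beta$. (For the receiver version one argues the same way, working with the normalized received powers $p(\ell'_l)/d(s'_l,r'_l)^\alpha$.)

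With the counting lemma, $\Sigma_2$ collapses to a geometric series. Using the doubling property of the metric, cover $B(r,2^{i+1}D)$ by $\O(2^{im})$ balls of radius $D/2$, where $m$ is the doubling dimension; since we are in a fading metric, $m<\alpha$. By the lemma each of these small balls contains the senders of only $\O(1)$ of our links, so the number of links $(s',r')\in\L$ with $d(s',r')\ge D$ and $d(s',r)\in(2^iD,2^{i+1}D]$ is $\O(2^{im})$, and each such link contributes at most $2^{-i\alpha}$ to $\Sigma_2$; the links with $d(s',r)\le D$ number $\O(1)$ and contribute at most $1$ each. Hence $\Sigma_2 = \O(1) + \sum_{i\ge 0}\O\!\left(2^{-i(\alpha-m)}\right) = \O(1)$ because $\alpha-m>0$. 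Repeating this around $s$ bounds $\Sigma_1$, and adding the two bounds finishes the proof.

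The step I expect to need the most care is the counting lemma, and specifically making it hold for \emph{all} $\beta>0$: when $\beta$ is small one cannot assume any pairwise separation of the clustered senders — arbitrarily many links with almost‑coinciding senders, or even with identical receivers, can sit in an admissible set — so any approach based on pushing pairs of links apart (e.g.\ via the product of two SINR constraints) fails here. What rescues the bound is summing the SINR constraints over the whole cluster at once, which is exactly what turns "how many links fit" into the one‑line inequality $M\le 1+2^\alpha/\beta$. The fading‑metric hypothesis, by contrast, enters only in the final dyadic summation, where $m<\alpha$ is precisely what makes the geometric series converge; in a general metric the same scheme instead loses a factor $\O(\log n)$ there, which is consistent with the $\O(\log n)$ bound the paper claims in that case.
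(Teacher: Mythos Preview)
Your argument is correct, and it takes a genuinely different route from the paper's.

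The paper proves the theorem via Proposition~\ref{prop:fadingmetric}: it first applies the Halld\'orsson--Wattenhofer \emph{signal strengthening} technique to split $\L$ into $\lceil 2\cdot 3^\alpha/\beta\rceil^2$ subsets, each admissible with the boosted gain $\beta'=3^\alpha$. In each subset the large gain lets one multiply the two SINR constraints of any pair of links to obtain $d(s,r')\,d(s',r)\ge 9\,d(s,r)\,d(s',r')$, which together with the triangle inequality forces \emph{pairwise separation}: every two endpoints are at distance $\ge d$. The bound then follows from a packing argument (disjoint balls of radius $d/2$) and the same dyadic summation you use. In contrast, you avoid signal strengthening entirely: your counting lemma sums the SINR constraints over an entire cluster of close‑by senders (resp.\ receivers) at once, which yields the clean bound $M\le 1+2^\alpha/\beta$ directly for any $\beta>0$, and you conclude with a covering rather than a packing argument.

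What each approach buys: your route is more self‑contained (no external reduction lemma) and makes the dependence on $\beta$ explicit in a single inequality; it also sidesteps exactly the ``product of two SINR constraints'' step you warned about, which the paper only salvages by first inflating $\beta$. The paper's route, on the other hand, yields the stronger structural fact that (after splitting into $\O(1)$ parts) all endpoints are mutually $d$‑separated, which is occasionally useful elsewhere. Both arguments use the fading‑metric hypothesis in the same place --- the convergence of the dyadic series --- so your remark that a general metric costs an $\O(\log n)$ factor there is consistent with Theorem~\ref{theorem:admissibleset-generalmetric}.
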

That is, we place some additional sender-receiver pair in an admissible set and and consider how many links having at least the same distance between the sender and its receiver can be at some distance. 

The sum consists of two terms for each link in $\L$, one that only depends on the sender and one that only depends on the receiver. In order to prove the theorem, we split up each link and -- in some sense -- we treat both endpoints separately. For a link set $\L$, we denote by $V(\L)$ the set of all endpoints of links in $\L$. In order to simplify notation, we assume no two links share an endpoint. For a node $v$ we denote by $\ell_v$ the link whose endpoint is $v$.

%Let us first consider the case that we are in a fading metric as defined Halld\'orsson \cite{Halldorsson2009a}. This means the doubling dimension is less then $\alpha$. In particular this includes the standard case of the plane with $\alpha > 2$. Observe that it suffices to prove the following proposition.

\begin{proposition}
\label{prop:fadingmetric}
Let $w \in V$ be an arbitrary node and $d>0$. Let $\L'$ be an admissible set of links of length at least $d$. Then we have
\[
\sum_{v \in V(\L')} \min \left\{ 1, \frac{d^\alpha}{d(v, w)^\alpha} \right\} = \O(1)
\]
in fading metrics.
\end{proposition}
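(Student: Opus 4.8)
The plan is to bound the number of endpoints $v \in V(\L')$ landing in each dyadic annulus around $w$, and then sum a geometric series that converges precisely because we are in a fading metric (doubling constant strictly less than $\alpha$). Partition $V(\L')$ into the set $A_{-1}$ of endpoints within distance $d$ of $w$ together with the sets $A_j = \{ v : 2^j d \le d(v,w) < 2^{j+1} d\}$ for $j \ge 0$. For $v \in A_j$ with $j \ge 0$ the contribution to the sum is at most $d^\alpha / (2^j d)^\alpha = 2^{-j\alpha}$, while for $v \in A_{-1}$ each contribution is at most $1$. So it suffices to show $|A_{-1}| = \O(1)$ and, more generally, $|A_j| = \O(2^{j \delta})$ for some $\delta < \alpha$; then $\sum_{j \ge 0} |A_j| 2^{-j\alpha} = \O\bigl(\sum_j 2^{-j(\alpha - \delta)}\bigr) = \O(1)$, and adding the $\O(1)$ from $A_{-1}$ finishes the proof.

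To bound $|A_j|$, I would invoke the admissibility of $\L'$ via the SINR constraint at the receivers. The key geometric observation is that within a region of bounded diameter — here, a ball of radius $\approx 2^j d$ around $w$ — the doubling property lets us cover it by $\O((2^j d / d)^\delta) = \O(2^{j\delta})$ balls of radius comparable to $d$, where $\delta$ is the doubling dimension (which in a fading metric we may take $< \alpha$). So it remains to argue that any single ball of radius $c\cdot d$ contains only $\O(1)$ endpoints of links from $\L'$. This is the heart of the matter: if a ball $B$ of radius $c\cdot d$ contained many endpoints, then it contains many senders or many receivers of links all of length $\ge d$, all mutually within distance $\O(d)$; pick the link $\ell = (s,r)$ among these whose receiver $r$ sees the strongest interference, or simply average — every other sender $s'$ in $B$ is at distance $\O(d)$ from $r$, hence contributes interference $p(\ell') / \O(d)^\alpha$ at $r$, while the desired signal is $p(\ell)/d(s,r)^\alpha \le p(\ell)/d^\alpha$ (using $d(s,r) \ge d$... wait, this goes the wrong way). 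The correct direction: order the links in $B$ by power and use that the weakest-powered link's receiver is swamped; more cleanly, sum the SINR constraints over all links in $B$ and use a standard "signal strength at own receiver vs. at others" exchange argument to derive that the number of such links is $\O(\beta)$ times a constant. This type of bounded-density-in-a-ball argument is exactly the kind of lemma that appears in the literature on the SINR model (e.g.\ Halld\'orsson), and I expect the cleanest route is a direct charging: for each link in $B$, its sender is within $\O(d)$ of every other receiver in $B$, so relaxing the SINR constraint and summing yields $|B \cap V(\L')| = \O(1)$.

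The main obstacle, then, is this last density-in-a-small-ball estimate — establishing that constantly many long links can have an endpoint in a ball of radius proportional to their minimum length $d$. Everything else (dyadic decomposition, the doubling covering of an annulus, the convergent geometric sum) is routine once $\delta < \alpha$ is available from the fading-metric hypothesis. One subtlety to handle carefully in that estimate is the distinction between senders and receivers: a ball may contain the sender of one link and the receiver of another without those two nodes being close in a way that immediately gives interference, so I would treat the "many senders in $B$" and "many receivers in $B$" cases (or, since $|V(\L')| = 2|\L'|$, just bound each) and in each case exhibit a link whose receiver is overwhelmed by the others. Once $|B \cap V(\L')| = \O(1)$ is in hand, combining it with the doubling bound $|A_j| = \O(2^{j\delta})$ and summing the series completes the proof, and we note that outside of fading metrics the same decomposition gives $|A_j| = \O(2^{j\delta})$ with $\delta$ possibly $\ge \alpha$, so only $\O(\log n)$ of the annuli matter — which is where the $\O(\log n)$ bound in the general-metric case will come from.
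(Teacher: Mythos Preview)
Your overall plan is correct and matches the paper in its broad shape --- dyadic annuli around $w$, a count of endpoints per annulus that grows like the doubling dimension, and a geometric series that converges because that dimension is strictly below $\alpha$. Where you diverge from the paper is in the key density step. The paper does \emph{not} argue directly that each $\O(d)$-ball contains $\O(1)$ endpoints. Instead it first applies the signal-strengthening technique of Halld\'orsson and Wattenhofer to split $\L'$ into $\O(1)$ subsets, each admissible with gain $3^\alpha$; at that gain, multiplying the two SINR inequalities for any pair of links and combining with the triangle inequality forces \emph{every} pair of endpoints to be at mutual distance at least $d$. This turns the problem into a pure packing question --- disjoint $d/2$-balls inside $B(w,(g+1)d/2)$ --- and the sender/receiver asymmetry you flag as a subtlety simply disappears. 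What the paper's route buys is that one SINR manipulation yields a clean metric separation, after which everything is geometry; what your route buys is self-containment, avoiding the strengthening lemma.

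Your density-in-a-ball sketch needs tightening, though. The line ``its sender is within $\O(d)$ of every other receiver in $B$'' is only true when both endpoints of the link lie in $B$; in general a ball may contain only senders, or only receivers, and the two cases need separate (though parallel) arguments. The clean way to run each is the summing trick you allude to. If $r_1,\dots,r_k$ all lie in a ball of radius $cd$, then $d(s_j,r_i)\le d(s_j,r_j)+2cd\le(1+2c)\,d(s_j,r_j)$, so with $q_i:=p_i/d(s_i,r_i)^\alpha$ the SINR constraint at $r_i$ gives $q_i\ge\beta(1+2c)^{-\alpha}\sum_{j\ne i}q_j$; summing over $i$ yields $k-1\le(1+2c)^\alpha/\beta$. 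If instead $s_1,\dots,s_k$ lie in the ball, then $d(s_j,r_i)\le(1+2c)\,d(s_i,r_i)$ and the same summation with $p_i$ in place of $q_i$ gives the identical bound. With this in hand, your covering-and-summing argument goes through.
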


Our proof is built upon two facts. On the one hand, we use a standard technique to decompose the link set $\L'$ into constantly many sets in which all sender and receiver nodes have a distance of at least $d$ to each other. On the other hand, we use the fading-metric property. Informally spoken, we regard all nodes as senders transmitting at the same power and having a minimum distance of $d$. Then in fading metrics the total interference by arbitrarily many senders converges. This is exactly the bound we need to prove the theorem.

\begin{proof}[Proof of Proposition~\ref{prop:fadingmetric}]
We use the technique of signal strengthening by Halld\'orsson and Wattenhofer \cite{Halldorsson2009}. This is, we decompose the link set $\L'$ to $\lceil 2 \cdot 3^{\alpha} / \beta \rceil^2$ sets, each admissible with a gain $\beta' = 3^\alpha$. We prove the claim for each of the (constantly many) sets separately. Consider two links $\ell =(u, v) \neq \ell' = (u', v')$ with $d(u, v') \leq d(u', v)$ in such a set $\L''$.

We know the SINR condition is fulfilled for some arbitrary power assignment with a gain of $\beta' = 3^\alpha$. So, we have
\[
\frac{p(\ell)}{d(s, r)^\alpha} \geq 3^\alpha \frac{p(\ell')}{d(s, r')^\alpha} \enspace, \qquad \text{ and} \qquad \frac{p(\ell')}{d(s', r')^\alpha} \geq 3^\alpha \frac{p(\ell)}{d(s', r)^\alpha} \enspace.
\]
By multiplying these two inequalities, we can conclude
\[
d(s, r') \cdot d(s', r) \geq 9 \cdot d(s, r) \cdot d(s', r') \enspace.
\]
On the other hand, by triangle inequality, we get
\[
d(s, r') \cdot d(s', r) \leq d(s, r') \left( d(s, r') + d(s', r') + d(s, r) \right) \enspace.
\]
We observe that in order to fulfill both conditions it is not possible that both $d(s, r') < 2 d(s', r')$ and $d(s, r') < 2 d(s, r)$ at the same time. Therefore $d(s, r') \geq \min\{2d(s, r), 2d(s', r')\}$. Using the fact that $d(s, r)$ and $d(s', r')$ are at least $d$ this means the distance between any of the involved nodes $s, r, s', r'$ has to be at least $d$.

Now we apply the fact we have a fading metric. Let $A < \alpha$ be the doubling dimension of the metric. This is, there is some absolute constant such that for all $x \in V$ and all $g>0$ any packing of balls of radius $Z$ inside a ball of radius $tZ$ contains at most $Ct^A$ balls. For $g>0$, let $T_g = \{v \in V(\L'') \mid d(v, w) < gd/2 \}$. We have seen above that the distance between any two nodes in $V(\L'')$ is at least $d$. This is balls of radius $d/2$ centered at the nodes in $T_g$ do not intersect and are fully contained in $B(w, (g+1)d/2)$. The packing constraint implies $\lvert T_g \rvert \leq C(g+1)^A$.

Therefore, we get
\[
\sum_{v \in V(\L'')} \min \left\{ 1, \frac{d^\alpha}{d(v, w)^\alpha} \right\} \leq \lvert T_2 \rvert + \sum_{g=2}^\infty \lvert T_{g+1} \setminus T_g \rvert \frac{d^\alpha}{(gd/2)^\alpha} \leq C \cdot 3^A + \sum_{g=2}^\infty \lvert T_g \rvert \left( \frac{2^\alpha}{(g-1)^\alpha} - \frac{2^\alpha}{g^\alpha} \right)
\]
For $g \geq 2$, we have (c.\,f. \cite{Halldorsson2009a})
\[
\frac{1}{(g-1)^\alpha} - \frac{1}{g^\alpha} = \frac{g^\alpha - (g-1)^\alpha}{(g-1)^\alpha g^\alpha}  \leq \frac{\alpha g^{\alpha - 1}}{g^\alpha (g-1)^\alpha}  \leq \frac{\alpha}{(g-1)^{\alpha + 1}} \leq \frac{3^{\alpha + 1} \alpha}{(g+1)^{\alpha + 1}} \enspace. 
\]
Using this result and the bound on $\lvert T_g \rvert$, we can bound the sum by
\[
C \cdot 3^A + \sum_{g=2}^\infty \frac{C 2^\alpha 3^{\alpha + 1} \alpha}{(g+1)^{\alpha - A + 1}} \leq C \cdot 3^A + \frac{C 2^\alpha 3^{\alpha + 1} \alpha}{\alpha - A} = \O(1) \enspace.
\]
This proves the claim.
\end{proof}

%The theorem's statement can also be considered given an admissible one would assign a larger transmission powers to transmissions over a large distance. Indeed we will see in this section that this resonable. Informally spoken, we show that given an admissible set we assume all senders to use the same (uniform) transmission power interference by larger links on smaller ones is not too big.
\section{Selection and Power Control Algorithm}
In the previous section, we found a necessary condition that all admissible sets have to fulfill, only based on the distances between the network nodes. Unfortunately, this condition is not sufficient for a set to be feasible. Nevertheless, we will show in this section a similar condition to be sufficient. Our algorithm for the $k$ channel capacity maximization problem then uses this condition to first select admissible sets $\L_t$. Afterwards we assign the transmission powers recursively in this set. The latter step of assigning the transmission powers could of course be exchanged by any other algorithm that chooses transmission powers for an admissible set such as the previously mentioned LP. However, also in this case our algorithm can be seen as a constructive proof the calculated set is indeed admissible.
 
In particular, our algorithm works as follows. Given the request set $\R$, consisting of pairs of senders and receivers that can be assigned to a set. We go through $\R$ in order of increasing length. For notational simplicity we assume that this ordering is unique by breaking ties arbitrarily, i.\,e. $d(\ell) \neq d(\ell')$ for all $\ell \neq \ell'$. Going through $\R$, we assign each link $\ell' = (s', r')$ to any set $\L_t$ for which the following condition is fulfilled:
\begin{equation}
\sum_{\substack{(s, r) \in \L_t \\ d(s, r) < d(s', r')}} \frac{d(s, r)^\alpha}{d(s, r')^\alpha} + \frac{d(s, r)^\alpha}{d(s', r)^\alpha} \leq \tau \qquad \text{where } \tau = \frac{1}{2 \cdot 3^\alpha \cdot \left(4 \beta + 2\right)} \enspace.
\label{eq:powercontrol-suffcond}
\end{equation}
If there is no such set $\L_t$, the link is discarded.

Afterwards, we select transmission powers as follows. Given an output set $\L_t$ of the above greedy selection, we go again through the links. This time in order of decreasing length setting the transmission power for a link $\ell' = (s', r')$ recursively as follows. We assign the longest link an arbitrary power (e.\,g. $1$). The further powers are set by
\[
p(\ell') = 4 \beta \sum_{\substack{(s, r) \in \L_t \\ d(s, r) > d(s', r')}} \frac{p(\ell)}{d(s, r')^\alpha} d(s', r')^\alpha \enspace.
\]
This is, we choose a transmission power that is proportional to the minimum transmission power needed to only deal with the longer links. Afterwards we scale the powers such that $p(\ell') \geq 2 \beta \noise \cdot d(\ell')$ for all $\ell'$.

\subsection{Feasibility of the Solution}
Let us first prove the algorithm calculates indeed a feasible solution. This is, for each set $\L_t$ the SINR constraint is fulfilled. We can see easily that neither interference from longer links nor noise can be significantly too large. Therefore the main part of our argumentation focuses on the interference from smaller links. Nonetheless, the smaller links themselves only use a transmission power proportional to the one necessary to compensate the larger ones. Using Condition~\ref{eq:powercontrol-suffcond} and the triangle inequality, we will see that their contribution is also not too much. 

\begin{theorem}
\label{theorem:powercontrol}
If in a link set $\L_t$ Condition~\ref{eq:powercontrol-suffcond} is fulfilled for all $\ell' \in \L_t$ then then the above procedure computes a power assignment fulfilling the SINR condition.
\end{theorem}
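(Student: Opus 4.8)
The plan is to verify the SINR inequality directly for each link $\ell' = (s',r') \in \L_t$, controlling the three contributions to its right-hand side — interference from longer links, interference from shorter links, and ambient noise — one at a time. The noise term is the easiest: multiplying all powers by a common factor leaves every signal-to-interference ratio unchanged, so it is enough to prove the interference bounds for the powers produced by the recursion and then observe that the final scaling to $p(\ell') \ge 2\beta\noise\cdot d(\ell')^\alpha$ makes the noise negligible against the signal. From here on I concentrate on interference.

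Interference from links longer than $\ell'$ requires no work: dividing the recursive definition of $p(\ell')$ by $d(s',r')^\alpha$ yields, for every link that is not the longest,
\[
\frac{p(\ell')}{d(s',r')^\alpha} \;=\; 4\beta \sum_{\substack{(s,r) \in \L_t \\ d(s,r) > d(s',r')}} \frac{p(\ell)}{d(s,r')^\alpha}\enspace,
\]
so the received signal strength at $r'$ is exactly $4\beta$ times the interference caused there by the longer links (for the longest link this interference is zero and its entire signal is free, so that case is even easier). It therefore suffices to show that the interference at $r'$ from the shorter links is at most $\tfrac{1}{4\beta}\cdot\tfrac{p(\ell')}{d(s',r')^\alpha}$; then the total interference is at most $\tfrac{1}{2\beta}$ times the signal, which together with the noise remark gives the SINR constraint with gain $\beta$.

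The heart of the argument is thus to bound $\sum_{(s,r)\in\L_t,\; d(s,r)<d(s',r')} \frac{p(\ell)}{d(s,r')^\alpha}$. I would write each summand as $\frac{p(\ell)}{d(\ell)^\alpha}\cdot\frac{d(\ell)^\alpha}{d(s,r')^\alpha}$: the second factor is precisely one of the terms occurring in Condition~\ref{eq:powercontrol-suffcond} for $\ell'$, and the triangle inequality — in the shape $d(s,r') \le d(s,r)+d(r,s')+d(s',r') \le 3\max\{d(s,r),d(r,s'),d(s',r')\}$, which is where the factor $3^\alpha$ in $\tau$ enters — lets one pass freely between $d(s,r')$, $d(s',r)$, $d(s,r)$ and $d(s',r')$ when the condition is applied. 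What still has to be controlled is the self-signal $\frac{p(\ell)}{d(\ell)^\alpha}$ of each shorter link $\ell$, and here one has to invoke the recursive definition of the powers once more: since $p(\ell)$ is assembled only from links strictly longer than $\ell$ — among which $\ell'$ itself appears — an induction over the links in order of decreasing length (the same order used to assign the powers) should bound $\frac{p(\ell)}{d(\ell)^\alpha}$ in terms of the signal strengths of the longer links, and ultimately of $\ell'$. Finding the invariant that survives this induction is the main obstacle; the constant $\tau = \frac{1}{2\cdot 3^\alpha(4\beta+2)}$ looks calibrated exactly to close it — the $4\beta$ matching the blow-up of the recursion, the ``$+2$'' absorbing the merging of the longer-link budget, the shorter-link budget and the noise into one SINR inequality, and $3^\alpha$ being the triangle-inequality loss. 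Once this bound on the shorter-link interference is established, combining it with the exact longer-link identity above and the scaling step completes the proof.
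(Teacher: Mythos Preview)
Your high-level decomposition is exactly right: longer-link interference equals $\tfrac{1}{4\beta}$ of the signal by construction, noise is absorbed by the final scaling, and everything reduces to showing that the shorter links contribute at most another $\tfrac{1}{4\beta}$ of the signal. But the step you sketch for this last part has a genuine gap. You factor $\frac{p(\ell)}{d(s,r')^\alpha} = \frac{p(\ell)}{d(\ell)^\alpha}\cdot\frac{d(\ell)^\alpha}{d(s,r')^\alpha}$ and hope to bound the first factor by an induction on link length; but the self-signal $p(\ell)/d(\ell)^\alpha$ is \emph{not} uniformly controlled over the shorter links --- the recursion sets it equal to $4\beta$ times the interference at $r$ from \emph{all} longer links, and this can be arbitrarily large compared to the signal of the fixed link $\ell'$. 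You flag ``finding the invariant'' as the main obstacle, and indeed no invariant of the shape ``$p(\ell)/d(\ell)^\alpha \le C\cdot(\text{something depending only on }\ell')$'' will close. Your two-link triangle inequality $d(s,r')\le 3\max\{d(s,r),d(r,s'),d(s',r')\}$ is also not quite the right tool: the estimate that actually matters runs through \emph{three} links.

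The paper avoids induction altogether. It unrolls the recursion exactly once --- writing $p_j = 4\beta\sum_{k<j} p_{j,k}$ with $p_{j,k} = p_k\, d(\ell_j)^\alpha/d(s_k,r_j)^\alpha$ --- so that the shorter-link interference $I_<$ at $r_i$ becomes the double sum $4\beta\sum_{j>i}\sum_{k<j} p_{j,k}/d(s_j,r_i)^\alpha$. The key lemma is then purely geometric: for each fixed source link $k$, the total \emph{indirect} interference it causes at $r_i$ via all intermediate links $j>\max\{i,k\}$ is at most $2\cdot 3^\alpha\tau$ times its \emph{direct} interference $p_k/d(s_k,r_i)^\alpha$ there. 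This is where the triangle inequality (along the path $s_k\to r_j\to s_j\to r_i$) and Condition~\ref{eq:powercontrol-suffcond} (applied once to link $i$ and once to link $k$) are used. After swapping the summation order and applying the lemma, one obtains the self-referential inequality
\[
I_< \;\le\; 2\cdot 3^\alpha\tau\cdot 4\beta\,\bigl(I_> + \tfrac{p_i}{d(\ell_i)^\alpha} + I_<\bigr),
\]
which for the stated value of $\tau$ solves to $I_< \le \tfrac{1}{4\beta}\cdot\tfrac{p_i}{d(\ell_i)^\alpha}$. So the missing ingredient in your plan is this ``indirect $\le$ constant $\times$ direct'' observation; it replaces your hoped-for induction invariant entirely and is what makes the constants in $\tau$ fit.
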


\begin{proof}
Let $\L_t = \{ (s_1, r_1), \ldots, (s_{\bar{n}}, r_{\bar{n}}) \}$ with $d(s_1, r_1) \geq d(s_2, r_2) \geq \ldots \geq d(s_{\bar{n}}, r_{\bar{n}})$. In this notation, for each link $(s_i, r_i)$ with $i > 1$ the power is set to
\[
p_i = 4 \beta \cdot \sum_{j=1}^{i-1} p_{i, j} \enspace, \qquad \text{where } \qquad p_{i,j}=\frac{p_j d(s_i, r_i)^\alpha}{d(s_j, r_i)^\alpha} \enspace.
\]

So the contribution of each link to the interference can again be divided into the contribution due to the different $p_{i, j}$ terms. Each of these terms can be thought of as the indirect effect of the previously chosen power $p_j$. The most important observation is now that we can bound the indirect effect of a link $k$ (via the adaptation of links that are smaller than link $k$) by its direct contribution to the interference as follows.
   
\begin{observation}
For $i, k \in [{\bar{n}}] := \{1, \ldots, \bar{n}\}$, we have
\[
\sum_{j=\max\{i, k\} + 1}^{\bar{n}} \frac{p_{j,k}}{d(s_j, r_i)^\alpha} \leq 2 \cdot 3^\alpha \cdot \tau \frac{p_k}{d(s_k, r_i)^\alpha} \enspace.
\]
\end{observation}

\begin{proof}
Let $m = \max\{i, k\} + 1$. We have
\[
\sum_{j=m}^{\bar{n}} \frac{p_{j,k}}{d(s_j, r_i)^\alpha} = \sum_{j=m}^{\bar{n}} \frac{p_k \cdot d(s_j, r_j)^\alpha}{d(s_j, r_i)^\alpha \cdot d(s_k, r_j)^\alpha} \enspace. 
\]

We split up the terms into two parts, namely $M_1 = \{ j \in [{\bar{n}}] \mid j \geq m, d(s_k, r_i) \leq 3 d(s_k, r_j) \}$, $M_2 = \{ j \in [{\bar{n}}] \mid j \geq m, d(s_k, r_i) > 3 d(s_k, r_j) \}$.

For all $j \in M_1$, we have $d(s_k, r_j) \geq \nicefrac{1}{3} \cdot d(s_k, r_i)$. This yields
\[
\sum_{j \in M_1} \frac{p_k \cdot d(s_j, r_j)^\alpha}{d(s_j, r_i)^\alpha \cdot d(s_k, r_j)^\alpha} \leq \frac{3^\alpha \cdot p_k}{d(s_k, r_i)^\alpha} \sum_{j \in M_1} \frac{d(s_j, r_j)^\alpha}{d(s_j, r_i)^\alpha} \leq \frac{3^\alpha \cdot p_k}{d(s_k, r_i)^\alpha} \cdot \tau
\]

For all $j \in M_2$, we have by triangle inequality
\[
d(s_k, r_i) \leq d(s_k, r_j) + d(s_j, r_j) + d(s_j, r_i) \leq \nicefrac{1}{3} \cdot d(s_k, r_i) + 2 d(s_j, r_i) \enspace,
\]
where the last step is due to the definition of $M_2$ and the fact that $d(s_j, r_j) \leq d(s_j, r_i)$ because of Condition~\ref{eq:powercontrol-suffcond}. This implies $d(s_j, r_i) \geq \nicefrac{1}{3} \cdot d(s_k, r_i)$ yielding
\[
\sum_{j \in M_2} \frac{p_k \cdot d(s_j, r_j)^\alpha}{d(s_j, r_i)^\alpha \cdot d(s_k, r_j)^\alpha} \leq \frac{3^\alpha p_k}{d(s_k, r_i)^\alpha} \sum_{j \in M_2} \frac{d(s_j, r_j)^\alpha}{d(s_k, r_j)^\alpha} \leq \frac{3^\alpha \cdot p_k}{d(s_k, r_i)^\alpha} \cdot \tau 
\]

Altogether this yields the claim.
\end{proof}

Let us now consider the interference a particular link $i$ is exposed to. Let the contribution by larger links be denote by $I_>$ and the one by smaller links by $I_<$. By definition of the powers for $I_<$, we get
\[
I_< = \sum_{j=i+1}^{\bar{n}} \frac{p_j}{d(s_j, r_i)^\alpha} = \sum_{j=i+1}^{\bar{n}} 4 \beta \sum_{k=1}^{j-1} \frac{p_{j, k}}{d(s_j, r_i)^\alpha} \enspace.
\]
By rearranging the sum this is
\[
4 \beta \sum_{j=i+1}^{\bar{n}} \sum_{k=1}^{i} \frac{p_{j, k}}{d(s_j, r_i)^\alpha} + 4 \beta \sum_{j=i+1}^{\bar{n}} \sum_{k=i+1}^{j-1} \frac{p_{j, k}}{d(s_j, r_i)^\alpha}
= 4 \beta \sum_{k=1}^i \sum_{j=i+1}^{\bar{n}} \frac{p_{j,k}}{d(s_j, r_i)^\alpha} + 4 \beta \sum_{k=i+1}^{\bar{n}} \sum_{j=k+1}^{\bar{n}} \frac{p_{j,k}}{d(s_j, r_i)^\alpha} \enspace.
\]
Using the above observation, this is at most
\[
2 \cdot 3^\alpha \cdot \tau 4 \beta \sum_{k=1}^{i-1} \frac{p_k}{d(s_k, r_i)^\alpha} + 2 \cdot 3^\alpha \cdot \tau 4 \beta \frac{p_i}{d(s_i, r_i)^\alpha} + 2 \cdot 3^\alpha \cdot \tau 4 \beta \sum_{k=i+1}^{\bar{n}} \frac{p_k}{d(s_k, r_i)^\alpha}
\]
In the first part, we can recognize the definition of $p_i / d(s_i, r_i)^\alpha$ and in the third part the one of $I_<$. Re-substituting both, we get
\[
I_< \leq \left(1+4 \beta\right) \cdot 2 \cdot 3^\alpha \cdot \tau \frac{p_i}{d(s_i, r_i)^\alpha} + 2 \cdot 3^\alpha \cdot \tau 4 \beta \cdot I_< \enspace.
\]

This implies
\[
I_< \leq \frac{\left(1+4 \beta \right) \cdot 2 \cdot 3^\alpha \cdot \tau}{1- 2 \cdot 3^\alpha \cdot \tau 4 \beta} \frac{p_i}{d(s_i, r_i)^\alpha} = \frac{1}{4 \beta} \frac{p_i}{d(s_i, r_i)^\alpha} \enspace. 
\]
By definition of the powers, we have $p_i / d(s_i, r_i)^\alpha \geq 4 \beta I_>$ and $p_i / d(s_i, r_i)^\alpha \geq 2 \beta \noise$. In total, we get for interference plus noise
\[
\sum_{j \neq i} \frac{p_j}{d(s_j, r_i)^\alpha} + \noise = I_< + I_> + \noise \leq \frac{1}{\beta} \frac{p_i}{d(s_i, r_i)^\alpha} \enspace.  
\]
This is the SINR constraint is fulfilled.
\end{proof}

In total our simple greedy algorithm computes a feasible approximate solution for the $k$ channel capacity maximization problem. Using the similarity between the sufficient (greedy) condition and the necessary condition from Theorem~\ref{theorem:lowerbound} will allow us derive the approximation factor in the next section.
\section{Deriving the Approximation Factor}
In order to simplify notation in the comparison of our greedy solution to an optimal one, we define an edge-weighted graph as follows. Each vertex represents a link in $\R$. For each pair of links $\ell = (s, r)$, $\ell' = (s', r')$, we define the edge weight by
\begin{equation}
\label{eq:edgeweights}
w(\ell, \ell') = \begin{cases}
\min \left\{ 1, \frac{d(s, r)^\alpha}{d(s, r')^\alpha} \right\} + \min \left\{ 1, \frac{d(s, r)^\alpha}{d(s', r)^\alpha} \right\} & \text{if $d(\ell') < d(\ell)$} \\
0 & \text{otherwise}
\end{cases} \enspace. 
\end{equation}
So this weight $w$ expresses the impact of $\ell$ on $\ell'$ in the greedy algorithm. In particular, Condition~\ref{eq:powercontrol-suffcond} of the greedy algorithm is equivalent to $\sum_{\ell \in \L} w(\ell, \ell') \leq \tau$.

Furthermore, for a link set $\L \subseteq \R$, we set
\[
\measure_{\ell}(\L) = \sum_{\ell' \in \L} w(\ell, \ell') \qquad \text{ and } \qquad \measure(\L) = \max_{\ell \in \R} \measure_{\ell}(\L) \enspace.
\]
So $\measure_\ell(\L)$ is the sum of edge-weights going out from $\ell$ into the set $\L$ and $\measure(\L)$ the maximum over all $\ell \in \R$. Note that $\ell$ may be any link from $\R$ and does not necessarily belong to $\L$. Theorem~\ref{theorem:lowerbound} shows that for all admissible sets $\measure(\L) = \O(1)$ in fading metrics.

\begin{theorem}
\label{theorem:greedyapproximationfactor}
Let $\L \subseteq \R$ be an arbitrary subset of the request set. Then the greedy algorithm calculates sets of joint cardinality $k \tau \big/ ( \measure(\L) + k \tau ) \cdot \lvert \L \rvert$.
\end{theorem}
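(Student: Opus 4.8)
The plan is a double-counting estimate that separates the links of $\L$ the greedy algorithm keeps from those it discards. Write $G = \L_1 \cup \dots \cup \L_k$ for the set of links returned by the greedy selection; since the $\L_t$ are disjoint, the joint cardinality equals $\lvert G\rvert$. Let $D = \L \setminus G$ be the set of links of $\L$ that the algorithm discards, so that $\lvert\L\rvert = \lvert\L\cap G\rvert + \lvert D\rvert \le \lvert G\rvert + \lvert D\rvert$. It therefore suffices to prove $k\tau\,\lvert D\rvert \le \lvert G\rvert\cdot\measure(\L)$: plugging this in gives $\lvert\L\rvert \le \lvert G\rvert\,(k\tau+\measure(\L))/(k\tau)$, which rearranges to the claimed bound $\lvert G\rvert \ge k\tau/(\measure(\L)+k\tau)\cdot\lvert\L\rvert$. (If $D = \emptyset$ then $\L\subseteq G$ and the bound is immediate, since $k\tau/(\measure(\L)+k\tau)\le 1$; so assume $D\neq\emptyset$.)

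Next I would rephrase the discard rule in terms of the weights $w$. Fix any $\ell'\in D$. When the greedy algorithm processed $\ell'$, Condition~\ref{eq:powercontrol-suffcond} failed for every channel; equivalently, the partial set $\L_t^{\mathrm{cur}}$ present in channel $t$ at that moment satisfied $\sum_{\ell\in\L_t^{\mathrm{cur}}} w(\ell,\ell') > \tau$ for every $t\in\{1,\dots,k\}$, using that Condition~\ref{eq:powercontrol-suffcond} is equivalent to $\sum_{\ell}w(\ell,\ell')\le\tau$. Since the greedy algorithm never removes a link from a channel, $\L_t^{\mathrm{cur}}\subseteq\L_t$, and as the weights are nonnegative we also have $\sum_{\ell\in\L_t}w(\ell,\ell') > \tau$ for the \emph{final} set $\L_t$. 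Summing these $k$ inequalities and using that $\L_1,\dots,\L_k$ partition $G$ yields $\sum_{\ell\in G} w(\ell,\ell') > k\tau$ for every $\ell'\in D$.

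It remains to sum over $\ell'\in D$ and exchange the order of summation:
\[
k\tau\,\lvert D\rvert \;<\; \sum_{\ell'\in D}\sum_{\ell\in G} w(\ell,\ell') \;=\; \sum_{\ell\in G}\measure_{\ell}(D) \;\le\; \sum_{\ell\in G}\measure_{\ell}(\L) \;\le\; \lvert G\rvert\cdot\measure(\L) ,
\]
where the second inequality uses $D\subseteq\L$ together with nonnegativity of the weights, and the third uses $\measure_{\ell}(\L)\le\measure(\L)$, which holds for every $\ell\in\R\supseteq G$ by definition of $\measure(\L)$ as the maximum over $\R$. This is the inequality $k\tau\,\lvert D\rvert\le\lvert G\rvert\cdot\measure(\L)$ promised above, so the proof is complete.

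I do not expect a genuine obstacle inside this argument: the only point requiring a moment's care is handling the \emph{dynamic} nature of the greedy rule — the sets $\L_t^{\mathrm{cur}}$ change during the run — which is dispatched by the monotonicity remark that links are never removed, after which everything is a one-line double count. The substantive content lies elsewhere, namely in Theorem~\ref{theorem:lowerbound}, which bounds $\measure(\L)$ for admissible $\L$; that is what lets this theorem deliver the advertised approximation ratios once $\L$ is taken to be an optimal solution.
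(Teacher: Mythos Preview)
Your argument is correct and is essentially the paper's own proof: both double-count $\sum_{\ell\in G}\sum_{\ell'\in\L\setminus G} w(\ell,\ell')$, bounding it below by $k\tau\,\lvert\L\setminus G\rvert$ via the failed greedy condition and above by $\measure(\L)\,\lvert G\rvert$ via the definition of $\measure$. Your write-up is slightly more explicit---you spell out the monotonicity step (that the current channel sets only grow, so the final $\L_t$ still witnesses the violation) and the trivial case $D=\emptyset$---but the approach is identical.
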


\begin{proof}
Let $\L_1, \ldots, \L_k$ be the sets calculated by the algorithm and $\ALG$ be their union.

By definition, we have for all $\ell \in \R$
\[
\sum_{\ell' \in \L \setminus \ALG} w(\ell, \ell') \leq \measure(\L) \enspace.
\]
This yields
\[
\sum_{\ell \in \ALG} \sum_{\ell' \in \L \setminus \ALG} w(\ell, \ell') \leq \measure(\L) \cdot \lvert \ALG \rvert \enspace.
\]

On the other hand, all $\ell' \in \L \setminus \ALG$ are discarded by the algorithm because Condition~\lref{eq:powercontrol-suffcond} is violated for all sets $\L_t$ calculated by the algorithm. Summing up all resulting inequalities, we get
\[
\sum_{\ell \in \ALG} w(\ell, \ell') = \sum_{t=1}^k \sum_{\ell \in \L_t} w(\ell, \ell') > k \tau 
\]
This yields
\[
\sum_{\ell' \in \L \setminus \ALG} \sum_{\ell \in \ALG} w(\ell, \ell') > k \tau \cdot \lvert \L \setminus \ALG \rvert 
\]

In combination this is
\[
k \tau \cdot \lvert \L \setminus \ALG \rvert < \measure(\L) \cdot \lvert \ALG \rvert
\]
Therefore
\[
k \tau \cdot \lvert \L \rvert < \left( \measure(\L) + k \tau \right) \cdot \lvert \ALG \rvert
\]
\end{proof}

Choosing $\L$ as the union of the $k$ admissible sets in the optimal solution, we get the approximation factor for the greedy algorithm.
\begin{corollary}
The greedy algorithm approximates the $k$ channel capacity maximization problem within a factor of $\O(1)$ in fading metrics.
\end{corollary}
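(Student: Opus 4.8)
The plan is to apply Theorem~\ref{theorem:greedyapproximationfactor} with $\L$ taken to be the union of the sets of an optimal solution, and then to notice that the factor $k$ which shows up along the way cancels. Since the greedy selection hands each of its output sets to the power-control procedure of Theorem~\ref{theorem:powercontrol}, and Condition~\ref{eq:powercontrol-suffcond} is maintained for every link placed into a set, the algorithm returns a feasible solution to the $k$ channel capacity maximization problem; it therefore remains only to bound its cardinality from below by a constant fraction of the optimum.

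Concretely, I would let $\L_1^{\ast},\dots,\L_k^{\ast}\subseteq\R$ be the $k$ disjoint admissible sets of an optimal solution and set $\L^{\mathrm{OPT}}=\L_1^{\ast}\cup\dots\cup\L_k^{\ast}$, so that $\lvert\L^{\mathrm{OPT}}\rvert$ equals the optimal objective value. The first step is to bound $\measure(\L^{\mathrm{OPT}})$. Because $\measure_{\ell}(\cdot)$ is a sum of edge weights over the argument set and the $\L_t^{\ast}$ are pairwise disjoint, for every $\ell\in\R$ we have $\measure_{\ell}(\L^{\mathrm{OPT}})=\sum_{t=1}^{k}\measure_{\ell}(\L_t^{\ast})$, hence $\measure(\L^{\mathrm{OPT}})\le\sum_{t=1}^{k}\measure(\L_t^{\ast})$. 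Each $\L_t^{\ast}$ is admissible, so in a fading metric Theorem~\ref{theorem:lowerbound}, read through the definition of $\measure$, gives $\measure(\L_t^{\ast})\le c$ for an absolute constant $c=\O(1)$ depending only on $\alpha$ and the doubling dimension; therefore $\measure(\L^{\mathrm{OPT}})\le ck$.

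Instantiating Theorem~\ref{theorem:greedyapproximationfactor} with $\L=\L^{\mathrm{OPT}}$, the greedy algorithm produces $k$ sets of joint cardinality at least
\[
\frac{k\tau}{\measure(\L^{\mathrm{OPT}})+k\tau}\cdot\lvert\L^{\mathrm{OPT}}\rvert \;\geq\; \frac{k\tau}{ck+k\tau}\cdot\lvert\L^{\mathrm{OPT}}\rvert \;=\; \frac{\tau}{c+\tau}\cdot\lvert\L^{\mathrm{OPT}}\rvert .
\]
Since $\tau$ is a fixed constant and $c=\O(1)$, the factor $\tau/(c+\tau)$ is a positive constant depending neither on $n$ nor on $k$, which is exactly the claimed $\O(1)$ approximation (together with feasibility from Theorem~\ref{theorem:powercontrol}).

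The only point that needs care is the cancellation in the displayed inequality: comparing against the union of $k$ optimal sets inflates $\measure$ to $\O(k)$ rather than $\O(1)$, so one might naively fear the bound is vacuous. The resolution is that this inflation is exactly balanced by the $k\tau$ term in the numerator of Theorem~\ref{theorem:greedyapproximationfactor} — that numerator is $k\tau$ precisely because the algorithm has $k$ channels to fill — and the two dependencies on $k$ annihilate. Everything else is routine: additivity of $\measure$ over disjoint unions, the per-set constant supplied by Theorem~\ref{theorem:lowerbound} in fading metrics, and feasibility supplied by Theorem~\ref{theorem:powercontrol}.
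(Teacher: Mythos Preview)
Your proposal is correct and follows exactly the approach the paper indicates: the paper simply states ``Choosing $\L$ as the union of the $k$ admissible sets in the optimal solution, we get the approximation factor for the greedy algorithm,'' and you have spelled out precisely this argument, including the additivity $\measure(\L^{\mathrm{OPT}})\le\sum_t \measure(\L_t^\ast)\le ck$ via Theorem~\ref{theorem:lowerbound} and the cancellation of $k$ in the bound from Theorem~\ref{theorem:greedyapproximationfactor}.
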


A further aspect to mention is the feasibility was only stated for unidirectional communication, whereas the typical goal is establishing bidirectional communication. Nevertheless, our algorithm is robust against this issue. In the proof of Theorem~\ref{theorem:lowerbound}, we treated senders and receivers the same way. Therefore, reversing links in the optimum does not change the result and we can also apply the algorithm for half duplex and full duplex. This is a much stronger result than on previous bidirectional models \cite{Fanghaenel2009a,Halldorsson2009a}, where both directions of a bidirectional link have to use the same transmission powers.

%Another point to be mentioned is that the above greedy analysis is comparable to the ones using a bounded inductive independence number. Therefore, for the problem in which links have different valuations for being in a set a framework by Hoefer et al.~\cite{Hoefer2010} can be applied with only slight modification. See \cite{Hoefer2010} for more details.
\section{General Metrics}
Taking the algorithm again into consideration, we observe that for feasibility it only needs the triangle inequality to hold. So, it computes correct solutions even in the case the fading metric assumption does not hold. However, Theorem~\ref{theorem:lowerbound} made use of the fading-metric property and indeed it does not hold in general metrics. However, in this case, we can still prove the following.
\begin{theorem}
\label{theorem:admissibleset-generalmetric}
Let $\L$ be an admissible set. Let furthermore $s$ and $r$ be some arbitrary nodes in $V$. Then we have
\[
\sum_{\substack{(s', r') \in \L \\ d(s', r') \geq d(s, r)}} \min \left\{ 1, \frac{d(s, r)^\alpha}{d(s, r')^\alpha} \right\} + \min \left\{ 1, \frac{d(s, r)^\alpha}{d(s', r)^\alpha} \right\} = \O(\log \lvert \L \rvert) 
\]
in any metric.
\end{theorem}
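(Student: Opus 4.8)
The plan is to prove a general-metric analogue of Proposition~\ref{prop:fadingmetric} and then derive Theorem~\ref{theorem:admissibleset-generalmetric} from it exactly as Theorem~\ref{theorem:lowerbound} is derived from Proposition~\ref{prop:fadingmetric}. Concretely, I would show: if $\L'$ is an admissible set all of whose links have length at least $d$, and $w$ is an arbitrary node, then $\sum_{v \in V(\L')} \min\{1, d^\alpha/d(v,w)^\alpha\} = \O(\log \lvert \L' \rvert)$. The reduction is unchanged from the fading case: put $d = d(s,r)$ and $\L' = \{(s',r') \in \L : d(s',r') \geq d\}$, which is admissible under the same power assignment since deleting links only decreases interference; then apply the analogue once with $w = s$ and once with $w = r$. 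The $w = s$ application bounds in particular $\sum_{(s',r') \in \L'} \min\{1, d^\alpha/d(s,r')^\alpha\}$ (the receiver terms), the $w = r$ application bounds the sender terms, and adding the two bounds, each $\O(\log \lvert \L' \rvert) \le \O(\log \lvert \L \rvert)$, gives the theorem.

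For the analogue I would first decompose $\L'$ by signal strengthening, exactly as in the proof of Proposition~\ref{prop:fadingmetric}, into $\O(1)$ subsets, each admissible with a larger constant gain $\beta'$, so that it suffices to bound the sum for one such subset $\L''$. The point where the fading proof uses the doubling property -- bounding the number of nodes of $V(\L'')$ at each distance scale from $w$ -- is replaced by the following feasibility-based counting lemma: for every $j \ge 0$, setting $A_j = \{v : 2^{j-1}d \le d(v,w) < 2^j d\}$ for $j \ge 1$ and $A_0 = \{v : d(v,w) < d\}$, the number of links of $\L''$ whose sender lies in $A_j$ is $\O(2^{j\alpha})$, and likewise the number whose receiver lies in $A_j$. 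Granting this, $\sum_{(s',r') \in \L''} \min\{1, d^\alpha/d(s',w)^\alpha\}$ decomposes over the annuli: annulus $j \ge 1$ contributes at most $\O(2^{j\alpha}) \cdot 2^{-(j-1)\alpha} = \O(1)$ and annulus $0$ contributes $\O(1)$; cutting the sum at the scale $J$ with $2^{J\alpha} \approx \lvert \L'' \rvert$, the $\O(J) = \O(\log \lvert \L'' \rvert)$ annuli with $j \le J$ contribute $\O(1)$ each, while the annuli with $j > J$ contribute at most $\sum_{j>J} \lvert \L'' \rvert \, 2^{-(j-1)\alpha} = \O(1)$. The receiver sum is symmetric, and summing over the $\O(1)$ strengthened subsets yields $\O(\log \lvert \L \rvert)$. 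None of this uses the fading assumption, consistent with the earlier remark that feasibility of the algorithm's output is unaffected.

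The core of the proof, and the step I expect to require the most care, is the counting lemma. Consider the links whose sender lies in $A_j$; any two such senders are within $R := 2^{j+1}d$ of each other. Split these links into \emph{short} ones with $d(s',r') \le R$ and \emph{long} ones with $d(s',r') > R$. For short links both endpoints lie in a ball around $w$ of radius $D = \O(2^j d)$; writing the gain-$\beta'$ SINR constraint at each short link's receiver, bounding every cross distance from above by $D$ and every link length from below by $d$, and letting $q_\ell = p(\ell)/d(\ell)^\alpha$ denote the received signal, one obtains $q_a \ge \beta' (d/D)^\alpha \sum_{b \ne a} q_b$ for every short link $a$; summing over $a$ forces the number of short links to be $\O((D/d)^\alpha) = \O(2^{j\alpha})$. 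For a long link $a$ the triangle inequality through the two clustered senders gives $d(s_b,r_a) \le d(s_a,s_b) + d(s_a,r_a) \le R + d(\ell_a) \le 2\,d(\ell_a)$, so the constraint at $r_a$ yields $q_a \ge \beta' 2^{-\alpha} \sum_{b \ne a} q_b$ over the long links and hence at most $\O(1)$ of them (a constant band of ``medium'' links with $R < d(\ell') \le 2R$ is folded into the short case with an adjusted constant). The receiver version is obtained the same way via $d(s_b,r_a) \le d(\ell_b) + R$ and telescoping in the $q_\ell$ rather than the $p_\ell$. The obstacles are entirely in the bookkeeping: picking the short/long threshold so the triangle-inequality estimates cost only constant factors, and arranging the telescoping so that the cluster of long links is controlled by the constraints at the clustered endpoints rather than the far ones.
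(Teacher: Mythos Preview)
Your proposal is correct and follows the same overall plan as the paper --- reduce to a ``sum over endpoints'' analogue of Proposition~\ref{prop:fadingmetric}, then bound the number of endpoints at each distance scale from $w$ --- but organizes the decomposition differently. The paper first splits $\L'$ into \emph{near} links (some endpoint $v$ with $d(v,w) < d(\ell_v)$) and \emph{far} links (both endpoints farther from $w$ than the link is long). For near links, signal strengthening shows each dyadic annulus around $w$ contains at most one such link, giving an $\O(1)$ total. For far links, a single counting fact --- at most $\O((Z/d)^\alpha)$ links have an endpoint within distance $Z$ of $w$, proved by looking at the SINR constraint of the minimum-power link --- is combined with a grouping of the endpoints by rank (sets $R_j=[2^j]\setminus[2^{j-1}]$) rather than by distance, yielding the $\O(\log\lvert\L\rvert)$ bound. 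Your route instead slices directly by distance annuli $A_j$ and handles each annulus with a short/long split; your ``long'' case is essentially the paper's near-link argument specialized to one annulus, and your ``short'' case is the paper's far-link counting fact (proved by summing the normalized inequalities rather than isolating the minimum-power link). Both decompositions work; the paper's rank-based grouping is slightly slicker in that it avoids the explicit cutoff at $J\approx\tfrac{1}{\alpha}\log\lvert\L''\rvert$, while yours is more uniform and avoids maintaining two separate propositions. One small slip: in the sender-clustered long case your inequality should read $p_a \ge \beta' 2^{-\alpha}\sum_{b\ne a} p_b$ rather than the $q$-version (as you yourself note, the $q$-telescoping belongs to the receiver-clustered case); this does not affect the conclusion.
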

Although one has to distinguish some different cases, the proof works in a similar way to the one for Theorem~\ref{theorem:lowerbound}. It can be found in Appendix~\ref{sec:admissibleset-generalmetric-proof}.

This theorem directly yields that $\measure(\L) = \O(\log n)$ for admissible sets $\L$ in general metrics. Therefore our algorithm calculates an $\O(\log n)$ approximation in this case. Note that the bound in Theorem~\ref{theorem:admissibleset-generalmetric} is tight. So at least for our analysis it makes a difference if the fading metric assumption holds.
\section{Scheduling Single-Hop Requests and the Scheduling Complexity}
Apart from the capacity maximization problems, the most common problem is the interference scheduling problem. This time, we have to calculate a schedule for all links in $\R$ and aim at minimizing the number of time slots used. In order to solve the single-hop version of the interference scheduling problem, we use a slight variant of our capacity maximization algorithm. This time the number $k$ of sets we fill is not fixed. Therefore, when going through the links from small to large we now put each link into the first time slot for which Condition~\ref{eq:powercontrol-suffcond} holds.

\begin{theorem}
The greedy algorithm calculates a schedule of length $\O(\measure(\R) \cdot \log n)$.
\end{theorem}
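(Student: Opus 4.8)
The plan is to analyze the modified greedy algorithm — where each link is placed into the \emph{first} time slot satisfying Condition~\ref{eq:powercontrol-suffcond} — and bound the total number of slots it uses. The key structural fact is this: if a link $\ell'$ is placed into slot $t$ (or discarded after trying all of slots $1,\dots,k$ when $k$ is the current maximum), then for every slot $s < t$ Condition~\ref{eq:powercontrol-suffcond} was violated, i.e.\ $\sum_{\ell \in \L_s, d(\ell) < d(\ell')} w(\ell,\ell') > \tau$. So the last link placed into the highest-numbered slot, say slot $T$, witnesses that slots $\L_1,\dots,\L_{T-1}$ each contribute more than $\tau$ to the incoming weight at $\ell'$ from strictly shorter links.

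The first step is to observe that summing these violations gives $\sum_{\ell \in \L_1 \cup \dots \cup \L_{T-1}} w(\ell, \ell') > (T-1)\tau$, where the sum ranges only over links shorter than $\ell'$. But every such $\ell$ lies in $\R$, so this quantity is at most $\measure_{\ell'}(\R') \le \measure(\R)$ for $\R'$ the set of links shorter than $\ell'$ — wait, that would give only $T = \O(\measure(\R)/\tau) = \O(\measure(\R))$, not the claimed $\O(\measure(\R)\log n)$. So the naive argument is too crude: it overcounts, since a single link $\ell$ in $\R$ can sit in only one slot, so it contributes $w(\ell,\ell')$ to the violation of exactly one slot, and the sum over all slots telescopes down to just $\measure(\R)$. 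The resolution — and the reason the $\log n$ appears — is that the argument above is applied not to the globally last link but more carefully: one should run the analysis recursively or in a layered fashion, comparing against a set $\L$ that is itself being scheduled, exploiting Theorem~\ref{theorem:greedyapproximationfactor}.

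Concretely, I would iterate the capacity-maximization guarantee. By Theorem~\ref{theorem:greedyapproximationfactor} with $\L = \R$, running the greedy algorithm with a single channel ($k=1$) extracts a set of size at least $\tau/(\measure(\R)+\tau) \cdot |\R| = \Omega(|\R|/\measure(\R))$ links into one slot. Think of the interference-scheduling greedy as doing exactly this repeatedly: slot~$1$ absorbs a $\Omega(1/\measure(\R))$ fraction, then among the remaining links slot~$2$ absorbs a $\Omega(1/\measure(\R))$ fraction of those (noting $\measure$ of a subset is no larger), and so on. After $\O(\measure(\R))$ slots a constant fraction is gone; after $\O(\measure(\R) \cdot \log |\R|)$ slots fewer than one link remains. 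Thus the total number of slots is $\O(\measure(\R) \log n)$. The one technical point to verify is that the \emph{first-fit} variant is at least as good slot-for-slot as re-running the $k=1$ capacity algorithm on the residual set: this holds because a link discarded from slots $1,\dots,j-1$ and placed in slot $j$ would also have been rejected by $j-1$ independent single-channel runs, so the first-fit slots dominate the independent-runs slots.

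The main obstacle is making the "slot-for-slot domination" coupling precise — i.e.\ showing that the set of links remaining after the first-fit algorithm has filled $j$ slots is contained in (or no larger than) the set remaining after $j$ rounds of the idealized "take a $\Omega(1/\measure(\R))$ fraction" process. Once that monotone coupling is in hand, the geometric-decay bound on the number of rounds, together with $\measure$ being monotone under taking subsets, closes the argument immediately.
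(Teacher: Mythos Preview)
Your final approach---iterate Theorem~\ref{theorem:greedyapproximationfactor} with $k=1$ so that each slot removes a $\tau/(\measure(\R)+\tau)$ fraction of the remaining links, yielding geometric decay and hence $\O(\measure(\R)\log n)$ slots---is exactly the paper's proof.

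Two small corrections to your surrounding discussion. First, the ``naive'' argument does not fail because of overcounting; it fails because $\sum_{\ell} w(\ell,\ell')$ with the \emph{second} argument fixed is the incoming weighted degree at $\ell'$, whereas $\measure(\R)$ bounds the maximum \emph{outgoing} weighted degree $\max_\ell \sum_{\ell''} w(\ell,\ell'')$; these are different quantities and the former is not controlled by $\measure(\R)$ for an arbitrary request set. (Had the bound been valid, it would have given the \emph{stronger} conclusion $T=\O(\measure(\R))$, which should have been a warning sign rather than a reason to look for a weaker argument.) Second, the ``main obstacle'' you flag is not an obstacle at all: because links are processed in increasing length order, the set $\L_1$ produced by first-fit is \emph{identical} to the output of the single-channel greedy on $\R$, and by induction $\L_t$ is identical to the single-channel greedy on $\R\setminus(\L_1\cup\dots\cup\L_{t-1})$; no coupling or domination argument is needed.
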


\begin{proof}
Let $n_t$ be the number of links that are not scheduled within the first $t$ time slots. We now consider the reduction of $n_t$ in a single step. Essentially each of these steps is a run of the greedy capacity maximization algorithm on the remaining requests with $k=1$. In Theorem~\ref{theorem:greedyapproximationfactor}, we proved that 
\[
n_t - n_{t+1} \geq \tau \big/ \left( \measure(\R) + \tau \right) n_t
\]
This yields
\[
n_{t + 1} \leq \left( 1 - \frac{1}{\measure(\R) \frac{1}{\tau} + 1} \right) n_t
\]
So for $t > ( \measure \frac{1}{\tau} + 1 ) \cdot \ln n$, we have $n_t < 1$. Therefore, we need at most $\O(\measure(\R) \cdot \log n)$ steps.
\end{proof}

Using Theorem~\ref{theorem:lowerbound}, we can compare this length easily to the optimal schedule length $T(\R)$. We have $\measure(\R) = \O(T(\R))$ for fading metrics. In general metrics, we have $\measure(\R) = \O(T(\R) \cdot \log n)$. So the algorithm computes an $\O(\log n)$ resp. $\O(\log^2 n)$ approximation.

Furthermore, we can give a much more precise answer to an open question by Moscibroda et al.~\cite{Moscibroda2006a} than previous results: ``What is the time required to physically schedule an arbitrary set of requests?'' Up to an $\O(\log n)$ factor this \emph{scheduling complexity} $T(\R)$ is simply $\measure(\R)$, this is maximum weighted out-degree in our edge-weighted graph.
\section{Scheduling Multi-Hop Requests}
In contrast to a cellular network, where transmission take place between a mobile device and a base station, sensor networks typically use multi-hop communications between devices of the same kind. So, the store-and-forward routing problems known from wired networks can be seen in a new context here. Interestingly existing approaches for wired networks work well together with our approach for wireless scheduling. In this section, we will show how to apply these techniques in our case to build approximation algorithms for the multi-hop scheduling problem where paths are fixed, e.\,g. by routing tables, and for the joint routing and scheduling problem, where paths may be chosen arbitrarily from an edge set.

\subsection{Scheduling Packets on Fixed Paths}
In the multi-hop scenario with fixed paths requests are not pairs of senders and receivers but sequences of network nodes (paths) that packets have to be sent along. We denote the paths by $P_1, \ldots, P_m$, where $P_{i, j}$ denotes the $i$th node on the $j$th path. In terms of links, the set $\R$ of all $\ell_{i, j} := (P_{i,j}, P_{i, j+1})$ has to be scheduled. The additional constraint is that $\ell_{i, j}$ has be served before $\ell_{i, j+1}$ for all $i$ and $j$. This is also an additional limitation of the optimal schedule length. We express it in the parameter $D$ (dilation), which is the length of the longest path. If $T$ is the optimal schedule length, we have, of course, $D \leq T$. Furthermore, the bounds $\measure(\R) = \O(T)$ resp. $\measure(\R) = \O(T \cdot \log n)$ still hold.

Our algorithm uses the technique of random delays that has been used for wired networks \cite{Leighton1994} but also for wireless networks \cite{Chafekar2007,Fanghaenel2009,Kesselheim2010}. It assigns each packet a delay $1 \leq \delta_i \leq \measure (\R) / 3 \ln n$. This way, we get $\measure(\R) / 3 \ln n + D$ single-hop problems. On each of them, we apply the single-hop algorithm and get the multi-hop schedule as a concatenation of these schedules. 

\begin{theorem}
The multi-hop algorithm calculates a schedule of length $\O(\measure(\R) \cdot \log n + D \cdot \log^2 n)$ whp.
\end{theorem}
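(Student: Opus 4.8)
The plan is to apply the random-delays framework of Leighton, Maggs, and Rao in the form already used for wireless scheduling. Each packet $i$ receives an independent uniform delay $\delta_i \in \{1, \ldots, \lceil \measure(\R) / 3\ln n \rceil\}$. A packet with delay $\delta_i$ waits $\delta_i$ time steps at its source and then, informally, would like to advance one hop per ``round.'' We carve the schedule into $\measure(\R)/3\ln n + D$ consecutive blocks (rounds): round $r$ is responsible for transmitting, for every packet $i$, the link $\ell_{h,i}$ with $h = r - \delta_i$ (when this index is in the valid range $1 \le h \le D_i$, where $D_i$ is the length of $P_i$). The set of links assigned to round $r$, call it $\R_r$, is a single-hop instance, and on each such instance we run the single-hop greedy algorithm of the previous section. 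Concatenating the resulting per-round schedules gives a valid multi-hop schedule, since within a packet the hops are served in increasing order of $h$ and hence in increasing order of round index.

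The key quantitative step is to bound $\measure(\R_r)$ for every round $r$ simultaneously, with high probability. Fix any link $\ell \in \R$; we must control $\measure_\ell(\R_r) = \sum_{\ell' \in \R_r} w(\ell,\ell')$. The total over all rounds is $\sum_r \measure_\ell(\R_r) = \measure_\ell(\R) \le \measure(\R)$, because each link $\ell'$ of $\R$ lands in exactly one round. Moreover $\measure_\ell(\R_r)$ is a sum of independent contributions: grouping the links of $\R$ by which packet they belong to, the terms from packet $i$ all go to the single round $r = h(\ell',i) + \delta_i$ determined by $\delta_i$, so $\measure_\ell(\R_r) = \sum_i X_{i,r}$ where $X_{i,r} = \sum_{\ell' \in P_i : h(\ell',i)+\delta_i = r} w(\ell,\ell')$ are independent across $i$. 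Each $X_{i,r}$ is bounded by $\max_{\ell'} w(\ell,\ell') \le 2$, and $\Ex{\measure_\ell(\R_r)} = \measure_\ell(\R) / \lceil \measure(\R)/3\ln n\rceil \le 3\ln n$ (using that the delay is uniform over that range and that $\measure_\ell(\R) \le \measure(\R)$). A Chernoff bound then gives $\Pr{\measure_\ell(\R_r) \ge c \log n} \le n^{-3}$ for a suitable constant $c$; a union bound over the at most $n$ choices of $\ell$ and the $\measure(\R)/3\ln n + D \le \mathrm{poly}(n)$ rounds yields that $\measure(\R_r) = \O(\log n)$ for all rounds $r$ simultaneously, with probability $1 - n^{-\Omega(1)}$.

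Finally we assemble the bound. Conditioned on this good event, the single-hop theorem applied to round $r$ produces a schedule of length $\O(\measure(\R_r)\cdot \log n) = \O(\log^2 n)$. There are $\measure(\R)/3\ln n + D$ rounds, so the concatenated schedule has length
\[
\O\!\left( \left(\tfrac{\measure(\R)}{\ln n} + D\right)\cdot \log^2 n \right) = \O\!\left(\measure(\R)\cdot \log n + D\cdot \log^2 n\right),
\]
which is the claimed bound, and it holds with high probability.

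The main obstacle is the concentration argument: one must be careful that $\measure_\ell(\R_r)$ is genuinely a sum of \emph{independent} bounded random variables (independence comes from the $\delta_i$ being independent, and the per-packet boundedness from the fact that a single delay value routes an entire path's worth of $w(\ell,\cdot)$ weight into one round, but the total such weight from one path could in principle be large — so one should either note that $w(\ell,\ell')$ summed along a single path is itself bounded, e.g.\ because a path visits each node at most once and $w(\ell,\cdot)$ behaves like a geometrically decaying quantity, or invoke a weighted Chernoff bound tolerating per-variable range up to $\O(1)$). Getting this boundedness clean, so that the Chernoff exponent beats the union bound over all $\ell$ and all rounds, is where the real care is needed; the rest is bookkeeping.
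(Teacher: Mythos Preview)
Your argument is correct and follows the same random-delays framework as the paper. The only bookkeeping difference is in the concentration step: the paper writes $\measure_\ell(\L_t)=\sum_{i,j} w(\ell,\ell_{i,j})X_{i,j,t}$ with $0$--$1$ indicators and invokes \emph{negative association} of the $X_{i,j,t}$ (for fixed $t$, at most one $j$ per packet $i$ can hit round $t$), whereas you group by packet to get genuinely independent summands $X_{i,r}\in[0,2]$. Your route is arguably cleaner, since independence already holds once you observe that for each packet exactly one hop lands in each round.

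Your final ``main obstacle'' paragraph, however, is confused and should be deleted: a single delay value does \emph{not} route an entire path's weight into one round---it spreads hop $j$ to round $\delta_i+j$, so $X_{i,r}$ picks up exactly one term $w(\ell,\ell_{i,r-\delta_i})\le 2$, as you already correctly stated two paragraphs earlier. There is no boundedness issue to worry about.
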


\begin{proof}
Let $X_{i, j, t} = 1$ iff $\delta_i = t + j$. Now we consider the sets $\L_t = \{ \ell_{i, j} \in V \times V \mid t = \delta_i + j \}$, which are used as inputs for the single-hop algorithm. We have for all $\ell \in V \times V$
\[
\measure_\ell(\L_t) = \sum_{i, j} w(\ell, \ell_{i, j}) \cdot X_{i, j, t} \enspace.
\]
So, we have $\Ex{\measure_\ell(\L_t)} \leq 3 \ln n$. For each $t$ the random variables $X_{i, j, t}$ are \emph{negatively associated} \cite{Dubhashi1998}. Therefore, we can apply a Chernoff bound to get.
\[
\Pr{\measure_\ell(\L_t) \geq (1 + \kappa) 3 \ln n} \leq \exp( - \ln n \cdot \kappa)
\]
This is, with probability $1 - n^{-\kappa + 1}$, for no link $\ell_{i, j}$ the random variable $\measure_{\ell_{i,j}}(\L_t)$ is above $(1 + \kappa) 3 \ln n$ for the set $\L_t$ this link is contained in. Therefore, the single-hop algorithm needs $\O(\kappa \log n)$ time slots for this set $\L_t$. This yields the total schedule length does not exceed $\O(\kappa \measure(\R) \log n + \kappa D \log^2 n)$. This proves the claim.
\end{proof}
In comparison to this optimal schedule length, this is an $\O(\log^2 n)$ approximation. 

\subsection{Routing}
Now consider the problem that paths are no more fixed but we have to select the ones minimizing the total time until delivery from an edge set $E$. The joint problem of finding paths, powers and a schedule has been stated by Chafekar et al.~\cite{Chafekar2007} as \emph{cross-layer latency minimization problem} (CLM problem). Here, we are given $m$ source-destination pairs $(s_i, t_i)$ and we have to find paths, powers and a schedule such that the time until the delivery of the last packet is minimized.  The best algorithm for this problem up to now was an $\O(\log \Delta \cdot \log^2 n)$ approximation in the plane and an $\O(\log \Delta \cdot \log^3 n)$ approximation for general metrics by Fangh\"anel et al. \cite{Fanghaenel2009}. In order to get an approximation algorithm for the CLM problem using the above results the only missing piece is the choice of paths.

For path selection, both Chafekar et al. \cite{Chafekar2007} and Fangh\"anel et al. \cite{Fanghaenel2009} adapt an LP rounding approach, which has been successfully applied in wired networks \cite{Raghavan1988}. Only very few modifications are needed to adapt this standard approach in our case. In particular, we use the following LP formulation. Here $N_{\text{in}}(v)$ resp. $N_{\text{out}}(v)$ denote the incoming resp. outgoing edges from $v$.
\begin{subequations}
\label{eq:optimalpaths}
\begin{align}
& \min z \\
\text{s.\,t.} & \sum_{e \in N_{\text{out}}(s_i)} y(i, e) - \sum_{e \in N_{\text{in}}(s_i)} y(i, e) = 1 && \text{for all } i \in [m] \label{eq:optimalpaths:flow1}\\
& \sum_{e \in N_{\text{out}}(v)} y(i, e) - \sum_{e \in N_{\text{in}}(v)} y(i, e) = 0 && \text{for all } i \in [m], v \in V \backslash \{ s_i, t_i \} \label{eq:optimalpaths:flow2}\\
& \sum_{e \in E} y(i, e) \leq z &&  \text{for all } i \in [m] \label{eq:optimalpaths:dilation}\\
& \sum_{i \in [m]} \sum_{e'} w(\ell, e') \cdot y(i, e') \leq z && \text{for all } \ell \in V \times V \label{eq:optimalpaths:sinr}\\
& y(i, e) \in \{ 0, 1\} && \text{for all } i \in [m], e \in E \label{eq:optimalpaths:integer}
\end{align}
\end{subequations} 
In this LP, the variables $y(i, e)$ indicate if path $i$ contains edge $e$ at any point. Feasible choices of paths establish an integral multi-commodity flow, which is expressed by Constraints~\ref{eq:optimalpaths:flow1}, \ref{eq:optimalpaths:flow2}, and \ref{eq:optimalpaths:integer}.  The objective is to minimize $z = \max\{ \measure(\R), D \}$. Constraint~\ref{eq:optimalpaths:dilation} ensures that $D \leq z$, Constraint~\ref{eq:optimalpaths:sinr} that $\measure_\ell(\R) \leq z$ for all $\ell \in V \times V$.

The relaxation is derived by exchanging Constraint~\ref{eq:optimalpaths:integer} by $0 \leq y(i, e) \leq 1$. Given an optimal fractional solution, it is rounded to an integer one in three steps. First the multi-commodity flow is decomposed into its paths. Afterwards paths longer that $2z$ are removed and their flow is distributed among the other paths for the same commodity. In the last step the actual rounding takes place. For each commodity the path is chosen independently at random using the flow as a probability distribution among the paths. One can easily show that if $z^\ast$ is the value of the fractional solution, we get a collection of paths with $D \leq 2 z^\ast$ and $\measure(\R) = \O(z^\ast \cdot \log n)$ whp this way.  

In combination, we get the following guarantees for the CLM problem. Observe that on the one hand the multi-hop scheduling algorithm computes a schedule of length $\O(z^\ast \cdot \log^2 n)$ whp. On the other hand, consider the optimal solution resulting in a schedule of length $T$. The paths represent an LP solution. Using the fact that $D \leq T$, and $\measure(\R) = \Omega(T)$ resp. $\measure(\R) = \Omega(T / \log n)$, we have $z^\ast = \Omega(T)$ resp. $z^\ast = \Omega(T / \log n)$. So, the schedule calculated by our algorithm has length $\O(T \cdot \log^2 n)$ resp. $\O(T \cdot \log^3 n)$. This is we get an $\O(\log^2 n)$ approximation in fading metrics and an $\O(\log^3 n)$ approximation in general metrics.
\section{Conclusions and Open Problems}
We showed in this paper that the fundamental optimization problems in the physical interference model can be approximated quite well by rather simple algorithms. However, still a some questions and problems stay unsolved. For example, a number of $\log n$ factors remain that could potentially be removed. It is also not clear, how small the constant for capacity maximization in fading metrics can become or if there is even a PTAS. In general, there are surprisingly few hardness results for all of the problems.

Furthermore, we found an abstraction to apply well-known techniques from wired networks to solve further problems and get a significantly better approximation than any previous algorithm. Probably results for more complex problems from the wired world could also be transferred -- such as dynamic packet injection -- or also results from graph-based interference models -- such as topology control.

Apart from this, to improve the practical relevance, it would be important to de-centralize the algorithm. Indeed it has many parallels to distributed heuristics on the problem whose ideas could potentially be transferred. Another practical issue is we assumed that neither the optimum nor our solution have any limitations on the transmission powers. Probably a number of insights could be transferred to the case of restricted transmission powers.

\clearpage
\thispagestyle{empty}
\bibliographystyle{plain}
\bibliography{bibliography}

\begin{appendix}
\section{Proof of Theorem~\ref{theorem:admissibleset-generalmetric}}
\label{sec:admissibleset-generalmetric-proof}
In order to prove Theorem~\ref{theorem:admissibleset-generalmetric}, we will split up the sum into two parts. We distinguish between \emph{near links} and \emph{far links}. The next section will only deal with the contribution near links. This is one of the endpoints is located closer to $w$ than to its actual counterpart. Afterwards, we will bound the contribution of the remaining links.

\subsection{Bounding the Contribution of Near Links}
Let us first bound the contribution of \emph{near links}, i.\,e., links for which one endpoint $v$ lies closer to $w$ than to its counterpart, $d(v, w) \leq d(\ell_v)$. As a matter of fact, there can be arbitrarily many such link. Nevertheless, we can prove that their distance to $w$ increases exponentially. Therefore, we can achieve a constant bound for any metric.
\begin{proposition}
Let $w \in V$ be an arbitrary node and $d>0$. Let $\L'$ be a admissible set of links of length at least $d$ such that for one of the of endpoints $v$ we have $d(v, w) < d(\ell_v)$. Then it holds
\[
\sum_{v \in V(\L)} \min \left\{ 1, \frac{d^\alpha}{d(v, w)^\alpha} \right\} = \O(1)
\]
in any metric.
\end{proposition}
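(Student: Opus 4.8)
The plan is to bound the sum by a convergent geometric series, using only the triangle inequality and the boosted SINR constraint, so that nothing metric‑specific (in particular no doubling dimension, unlike in the proof of Proposition~\ref{prop:fadingmetric}) enters. I would first apply the signal‑strengthening technique to decompose $\L'$ into $\O(1)$ subsets, each admissible with a sufficiently large constant gain $\beta' = \gamma_0^\alpha$, and prove the bound for one such subset $\L''$. Since every endpoint of a near link $\ell=(v,v')$ satisfies $d(\cdot,w) < 2\,d(\ell)$ (for the near endpoint by definition, for the other by the triangle inequality), it suffices to bound $\sum_{v\in V(\L''):\,d(v,w)<2\,d(\ell_v)} \min\{1, d^\alpha/d(v,w)^\alpha\}$. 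I would partition the relevant endpoints by distance to $w$ into the ball $B(w,d)$ and the dyadic annuli $A_k=\{v: 2^{k-1}d\le d(v,w)<2^k d\}$, $k\ge 1$; an endpoint of $A_k$ contributes at most $2^{(1-k)\alpha}$ and an endpoint of $B(w,d)$ at most $1$, so the whole sum is $\O(1)$ once one shows that $B(w,d)$ and each $A_k$ contain only $\O(1)$ such endpoints.

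The counting claim is the heart of the matter, and I would prove it in two cases. Fix $k$, set $\rho=2^{k-1}d$ (respectively $\rho=d$ for the ball), and note that two relevant endpoints in $A_k$ are within $4\rho$ of each other while any relevant endpoint $v$ there has $d(\ell_v) > \tfrac12 d(v,w)\ge \tfrac12\rho$. In the \emph{cross‑type} case, where $v_i$ is the sender endpoint of a link $\ell_i$ and $v_j$ is the receiver endpoint of a \emph{different} link $\ell_j$, multiplying the two boosted SINR inequalities and applying the triangle inequality exactly as in the proof of Proposition~\ref{prop:fadingmetric} yields $d(s_i,r_j)\ge c(\gamma_0)\cdot\min\{d(\ell_i),d(\ell_j)\}$ with $c(\gamma_0)\to\infty$ as $\gamma_0\to\infty$; for $\gamma_0$ large this contradicts $d(s_i,r_j)=d(v_i,v_j)\le 4\rho$. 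Hence the relevant endpoints of distinct links in $A_k$ are all of the same type, and a single link contributes at most two. In the \emph{same‑type} case, where $s_i,s_j$ are sender endpoints of distinct links in $A_k$, the SINR constraint for $\ell_i$ gives $d(s_j,r_i)\ge\gamma_0\, d(\ell_i)(p_j/p_i)^{1/\alpha}$, while $d(s_j,r_i)\le d(s_j,s_i)+d(\ell_i) < 9\,d(\ell_i)$ by the triangle inequality together with $d(s_j,s_i)\le 4\rho\le 8\,d(\ell_i)$; this forces $p_j<p_i$ once $\gamma_0$ is large, and by symmetry $p_i<p_j$, a contradiction. So at most one sender and one receiver endpoint of distinct links can be relevant in $A_k$, and the total there is $\O(1)$; the same reasoning applies to $B(w,d)$. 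Summing the geometric series, and then over the $\O(1)$ subsets, proves the proposition.

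The step I expect to be the real obstacle is the possibility that very long links cluster endpoints arbitrarily close to $w$: a priori there is no bound on how many links have an endpoint in a small ball around $w$. This is exactly what the same‑type power argument rules out --- two endpoints of the same role belonging to links that are long relative to their mutual distance would each need strictly more power than the other --- and it is the one place where the constant‑factor gain boost is genuinely needed; the cross‑type case is a direct distance lower bound, and everything else is the triangle inequality together with summing a geometric series. In the eventual proof of Theorem~\ref{theorem:admissibleset-generalmetric} the near links handled here contribute an $\O(1)$ term, so the logarithmic loss must come entirely from the far links, which are treated separately afterwards.
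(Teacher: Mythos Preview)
Your argument is correct. Both your proof and the paper's follow the same outline --- signal strengthening, dyadic partitioning by distance to $w$, a counting claim in each bucket via an SINR contradiction, and summing a geometric series --- but the partitioning choice differs. The paper partitions \emph{links} according to the distance from $w$ to their \emph{closer} endpoint: defining $T=\{(s,r):\min\{d(s,w),d(r,w)\}<d\}$ and $U_a=\{(s,r):2^ad\le\min\{d(s,w),d(r,w)\}<2^{a+1}d\}$, it shows each bucket contains at most one link. Since both links in a bucket have their closer endpoints near each other and have length exceeding $2^a d$, \emph{both} cross-distances $d(s,r')$ and $d(s',r)$ are bounded by a constant times the respective link lengths, and the single product inequality $d(s,r')\,d(s',r)\ge 16\,d(s,r)\,d(s',r')$ (from multiplying the two boosted SINR constraints) already yields a contradiction. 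Your endpoint partition is equally valid but forces the extra same-type/cross-type case split: when two endpoints in the same annulus play the same role, only one of the two cross-distances is controlled, so you need the auxiliary power-comparison argument. What you gain is perhaps a more modular argument that reuses Proposition~\ref{prop:fadingmetric} verbatim for the cross-type case; what the paper's link-based partition buys is that the same-type case never arises. One small remark: for two \emph{receiver} endpoints your ``by symmetry $p_i<p_j$'' does not follow directly (the bound you get is $(p_j/p_i)^{1/\alpha}<9\,d(\ell_j)/(\gamma_0\,d(\ell_i))$, which involves the ratio of link lengths), but multiplying this with its symmetric counterpart still gives $1<81/\gamma_0^2$, the desired contradiction.
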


\begin{proof}
Again, we use the technique of signal strengthening by Halld\'orsson and Wattenhofer \cite{Halldorsson2009}. We now decompose the link set $\L'$ to $\lceil 2 \cdot 4^{\alpha} / \beta^2 \rceil$ sets, each admissible with a gain $\beta' = 2^\alpha$. We prove the claim for each of the (constantly many) sets separately.

We divide such a set $\L''$ into further sets as follows. Let $T = \{ (s, r) \in \L \mid \min \{ d(s, w), d(r, w) \} < d \}$ be the set of links having one endpoint at distance less than $d$ from $w$. Furthermore, for $a \geq 0$ let $U_a = \{ (s, r) \in \L \mid 2^a d \leq \min \{ d(s, w), d(r, w) \} < 2^{a+1} d \}$. We claim each of the sets $T$ and $U_a$, $a>0$, contains at most one link.

Suppose any of the sets contains two links $\ell = (s, r) \neq \ell' = (s', r')$. Let $d(s, r) \geq d(s', r')$. Observe that due to the triangle inequality the smaller ones of the distances $d(s, r')$ and $d(s', r)$ is less than $4 d(s', r')$, the larger one is at most $4 d(s, r)$. 
 
We know the SINR condition is fulfilled for some arbitrary power assignment with a gain of $\beta' = 4^\alpha$. This is, we have
\[
\frac{p(\ell)}{d(s, r)^\alpha} \geq 4^\alpha \frac{p(\ell')}{d(s', r)^\alpha} \enspace, \qquad \text{ and } \qquad
\frac{p(\ell')}{d(s', r')^\alpha} \geq 4^\alpha \frac{p(\ell')}{d(s, r')^\alpha} \enspace.
\]
By multiplying these two inequalities, we can conclude
\[
d(s, r') \cdot d(s', r) \geq 16 \cdot d(s, r) \cdot d(s', r') \enspace.
\]
This is a contradiction to the above observation.

Having bounded the cardinality of the sets, we can use them to bound the sum as follows.
\[
\sum_{v \in V(\L'')} \min \left\{ 1, \frac{d^\alpha}{d(v, w)^\alpha} \right\} \leq 2 \lvert T \rvert + \sum_{a=0}^\infty 2 \lvert U_a \rvert \frac{d^\alpha}{(2^a d)^\alpha} \leq 2 + 2 \sum_{a=0}^\infty \frac{1}{(2^a)^\alpha} = 2 + \frac{2}{1 - 2^{-\alpha}} = \O(1) \enspace.
\]
\end{proof}

\subsection{Bounding the Contribution of Far Links}
Now consider the links having the property that for both endpoints the corresponding endpoint lies closer than $w$. We use a proof technique that has been used before in \cite{Fanghaenel2009}.

\begin{proposition}
\label{prop:generalmetric:farlinks}
Let $w \in V$ be an arbitrary node and $d>0$. Let $\L'$ be a admissible set of links of length at least $d$ such that for both endpoints $v$ we have $d(v, w) \geq d(\ell_v)$. Then it holds
\[
\sum_{v \in V(\L)} \min \left\{ 1, \frac{d^\alpha}{d(v, w)^\alpha} \right\} = \O(\log n)
\]
in any metric.
\end{proposition}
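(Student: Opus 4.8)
The plan is to control the sum by first estimating, for every radius $\rho>0$, the number $N(\rho)$ of endpoints of $\L'$ lying within distance $\rho$ of $w$, and then summing the contributions over dyadic shells around $w$. The key feature to establish is that $N(\rho)$ grows only \emph{polynomially}, namely $N(\rho)=\O((\rho/d)^\alpha)$ up to a harmless additive logarithm; combined with the trivial cut-off $N(\rho)\le 2\lvert\L'\rvert$, this is exactly what produces a single factor $\log\lvert\L'\rvert$ in the end.

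First I would reduce the estimate on $N(\rho)$ to counting links inside a ball. The far assumption gives $N(\rho)=0$ for $\rho<d$, since every endpoint $v$ satisfies $d(v,w)\ge d(\ell_v)\ge d$. Moreover, if an endpoint $v$ of a link $\ell$ lies in $B(w,\rho)$ with $\rho\ge d$, then $d(\ell)=d(\ell_v)\le d(v,w)\le\rho$, so the opposite endpoint lies in $B(w,2\rho)$. Hence every link that contributes to $N(\rho)$ is contained entirely in $B(w,2\rho)$ and has length in $[d,4\rho]$, and $N(\rho)$ is at most twice the number of such links.

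Next I would bound the number of links of $\L'$ inside $B(w,2\rho)$ by partitioning them into the $\O(\log(\rho/d))$ length classes $[2^kd,2^{k+1}d)$ and arguing within one class directly from the SINR constraints (no signal strengthening is needed here). For the $q$ links of a fixed class I would add up their SINR inequalities, on the right-hand side keeping only the mutual interference of these $q$ links and lower-bounding each retained term $p(\ell_j)/d(s_j,r_i)^\alpha$ using that $d(s_j,r_i)$ is at most the diameter $4\rho$ of $B(w,2\rho)$; using further that every such link has length $<2^{k+1}d$, the common sum of the powers cancels and one obtains $q\le 1+\beta^{-1}(2\rho/(2^kd))^\alpha$. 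Summing the geometric series over $k\ge0$ then shows that $B(w,2\rho)$ contains $\O((\rho/d)^\alpha+\log(\rho/d))$ links, whence $N(\rho)=\O((\rho/d)^\alpha+\log(\rho/d))$.

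Finally I would split the endpoints according to $d(v,w)$. The at most $2\lvert\L'\rvert$ endpoints with $d(v,w)>d\,\lvert\L'\rvert^{1/\alpha}$ each contribute less than $1/\lvert\L'\rvert$, so they amount to $\O(1)$ in total; the remaining endpoints fall into the $\O(\log\lvert\L'\rvert)$ dyadic shells $d(v,w)\in[2^bd,2^{b+1}d)$, and shell $b$ contains at most $N(2^{b+1}d)=\O(2^{b\alpha}+b)$ endpoints, each contributing at most $2^{-b\alpha}$, i.e.\ $\O(1)$ per shell. Summing over the shells gives the claimed $\O(\log\lvert\L'\rvert)$ bound. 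I expect the genuinely delicate point to be the per-length-class counting step, and in particular making sure that the additive $\log(\rho/d)$ term it produces is harmless: it is, because $(\rho/d)^\alpha$ dominates it, and because the cut-off at radius $d\lvert\L'\rvert^{1/\alpha}$ caps the number of non-negligible shells at $\O(\log\lvert\L'\rvert)$ — this cut-off is precisely the mechanism that turns the otherwise unbounded summation over shells into the single final logarithmic factor.
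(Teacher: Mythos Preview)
Your proposal is correct and follows the same high-level shape as the paper's proof—bound the number $N(\rho)$ of endpoints of $\L'$ within radius $\rho$ of $w$, then sum over dyadic pieces—but the two executions differ in both steps. For the counting step, the paper avoids your length-class decomposition entirely: it looks at the single SINR inequality of the link of \emph{minimum power} among those with an endpoint in $B(w,Z)$, obtaining directly $N(Z)\le \tfrac{2}{\beta}(4Z/d)^\alpha+2$ with no additive $\log(Z/d)$ term. (In fact your own summed-SINR argument already gives this cleaner bound, without length classes, if you simply use the uniform lower bound $d(s_i,r_i)\ge d$ on the left-hand side; note also that to upper-bound the left-hand side you need the \emph{lower} endpoint $2^kd$ of the class, not the upper one $2^{k+1}d$ as you wrote.) For the summation step, the paper orders the endpoints $v_1,\dots,v_{\bar n}$ by distance to $w$ and groups them by \emph{rank} into blocks $R_j=\{2^{j-1}+1,\dots,2^j\}$ rather than by dyadic radii; the counting bound then forces $d(v_{2^{j-1}+1},w)^\alpha \ge \Omega(2^{j})\,d^\alpha$, so each of the $\O(\log\bar n)$ blocks contributes $\O(1)$. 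This rank decomposition makes the $\log\lvert\L'\rvert$ cap automatic and sidesteps your explicit radius cutoff at $d\lvert\L'\rvert^{1/\alpha}$. Both routes yield the same bound; the paper's is a bit leaner, while yours makes the cutoff mechanism and the role of the trivial bound $N(\rho)\le 2\lvert\L'\rvert$ more explicit.
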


Before coming to the actual proof of the lemma, we need a bound on the number of nodes that can at most lie within a distance of $Z$ from $w$. 

\begin{fact}
\label{fact:numberofnodesinneighborhood}
For all $Z > 0$ we have for $K_{Z}(w) = \left\{ v \in V(\L') \mid d(v, w) \leq Z \right\}$
\[
\lvert K_{Z}(w) \rvert \leq \frac{2}{\beta} \left( \frac{4 Z}{d} \right)^\alpha + 2 \enspace.
\]
\end{fact}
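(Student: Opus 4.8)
The plan is to use the SINR constraint to bound how densely the endpoints of links in $\L'$ can cluster around a fixed node $w$, exploiting the hypothesis that every link has length at least $d$. Fix $Z>0$ and consider $K_Z(w)$, the set of endpoints of $\L'$ within distance $Z$ of $w$. Each such endpoint $v$ belongs to a link $\ell_v=(s_v,r_v)$ of length $\geq d$, and — crucially — under the ``far links'' hypothesis of Proposition~\ref{prop:generalmetric:farlinks} we know $d(v,w)\geq d(\ell_v)\geq d$ for the relevant endpoints, so both endpoints of each counted link lie in a controlled annulus around $w$. The idea is that a link whose sender sits near $w$ dumps a lot of interference onto every receiver near $w$, so the SINR feasibility of $\L'$ limits how many such links there can be.

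First I would restrict to the senders: let $S$ be the set of senders $s$ with $s\in K_Z(w)$, and let $r^\star$ be the receiver minimizing $d(s_{r^\star}, r^\star)$, i.e. pick the link $\ell^\star=(s^\star,r^\star)$ in this cluster of shortest length. For any sender $s\in S$ other than $s^\star$, by the triangle inequality $d(s, r^\star)\leq d(s,w)+d(w, r^\star)\leq 2Z$ provided both endpoints are within $Z$ of $w$ — and more carefully $d(s, r^\star)\leq d(s,w)+d(w,s^\star)+d(s^\star, r^\star)\leq 4Z$ since $d(\ell^\star)\leq d(s^\star, w)\leq Z$. Hence each such sender contributes at least $p(\ell_s)/(4Z)^\alpha$ to the interference at $r^\star$. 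The SINR constraint at $r^\star$ gives
\[
\frac{p(\ell^\star)}{d(\ell^\star)^\alpha} \geq \beta \sum_{s\in S,\ s\neq s^\star} \frac{p(\ell_s)}{d(s, r^\star)^\alpha} \geq \frac{\beta}{(4Z)^\alpha}\sum_{s\in S,\ s\neq s^\star} p(\ell_s) \geq \frac{\beta\,(|S|-1)\,p_{\min}}{(4Z)^\alpha},
\]
where $p_{\min}$ is the smallest transmission power among these links. Since $d(\ell^\star)\geq d$ and $p(\ell^\star)\geq p_{\min}$ need not hold directly, I would instead apply the same argument with $\ell^\star$ chosen to be the link in the cluster of \emph{largest} power, or — cleaner — sum the normalized SINR inequalities over all receivers in the cluster, which makes the power terms telescope and leaves a bound of the form $|S|-1\leq \tfrac{1}{\beta}(4Z/d)^\alpha$. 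Doing the symmetric argument for receivers and adding $|K_Z(w)|\leq |S|+|R|$ then yields $|K_Z(w)|\leq \tfrac{2}{\beta}(4Z/d)^\alpha + 2$, where the additive $2$ accounts for the two distinguished extremal links excluded from each sum.

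The main obstacle is handling the transmission powers cleanly: the powers $p(\ell)$ are arbitrary, so I cannot simply say ``each link contributes at least $p/(4Z)^\alpha$'' with a uniform $p$. The right fix is the standard trick of normalizing — divide the SINR inequality for receiver $r$ by $p(\ell_r)/d(\ell_r)^\alpha$ on both sides, observe that $d(s, r)\leq 4Z$ for the cross terms while $d(\ell_s)\geq d$, so the ratio $p(\ell_s) d(\ell_r)^\alpha / (d(s,r)^\alpha d(\ell_r)^\alpha\cdot(\text{normalizer}))$ can be lower-bounded after summing over a suitable subset where the powers are comparable, or by picking the maximum-power link as the reference. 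I expect the bookkeeping of which link to exclude and the exact constant $4$ in $(4Z/d)^\alpha$ to require a careful triangle-inequality chain $d(s, r^\star)\leq d(s,w)+d(w, r^\star)$ combined with $d(w, r^\star)\leq Z$ and $d(w, s^\star)+d(\ell^\star)\leq 2Z$; everything else is routine.
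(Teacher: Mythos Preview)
Your plan has the right shape and you have correctly spotted the single obstacle, but your proposed fix is in the wrong direction. Choosing $\ell^\star$ to be the link of \emph{largest} power does not help: in the SINR inequality at $r^\star$ the interferers appear with their own powers $p(\ell_s)$, and if $p(\ell^\star)$ is maximal you have no lower bound on $\sum_{s\neq s^\star} p(\ell_s)$ in terms of $p(\ell^\star)$, so the count does not come out. Your alternative of ``summing the normalized SINR inequalities so that the powers telescope'' does not telescope either; after normalizing, the cross terms are ratios $p_j/p_i$ and there is no cancellation across $i$.

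The paper's proof makes exactly the opposite extremal choice: take $\ell_1=(s_1,r_1)$ to be the link of \emph{minimum} power among those with an endpoint in $K_Z(w)$. Then every other such link satisfies $p_i\geq p_1$, so the SINR constraint at $r_1$ gives
\[
\frac{1}{\beta}\,\frac{p_1}{d(s_1,r_1)^\alpha}\ \geq\ \sum_{i\neq 1}\frac{p_i}{d(s_i,r_1)^\alpha}\ \geq\ \frac{(j-1)\,p_1}{(4Z)^\alpha},
\]
and $p_1$ cancels, yielding $j-1\leq \tfrac{1}{\beta}(4Z/d)^\alpha$. Since each of the $j$ links contributes at most two nodes to $K_Z(w)$, one gets $|K_Z(w)|\leq 2j\leq \tfrac{2}{\beta}(4Z/d)^\alpha+2$. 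Your triangle-inequality reasoning for the bound $d(s_i,r_1)\leq 4Z$ is fine (both endpoints of each relevant link lie within $2Z$ of $w$ by the far-link hypothesis $d(v,w)\geq d(\ell_v)$); the only missing ingredient is picking the minimum-power link rather than the maximum-power one.
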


\begin{proof}
Let $(u_1, v_1), \ldots, (u_j, v_j)$ be the links each of which has at least one endpoint in $K_Z(w)$. Note that the corresponding endpoints lies within a distance of $2Z$ from $w$. Let $p$ be the power assignment allowing all links to be admissible. Furthermore, w.\,l.\,o.\,g., let $(s_1, r_1)$ be the link with minimal power $p_1$. As the SINR condition is satisfied for link $(s_1, r_1)$, we get:
\[
\frac{1}{\beta} \frac{p_1}{d(s_1, r_1)^\alpha} \geq \sum_{i \neq 1} \frac{p_i}{d(s_i, r_1)^\alpha} \geq \sum_{i \neq 1} \frac{p_i}{(4 Z)^\alpha} \geq \frac{(j-1) \, \cdot \, p_1}{(4 Z)^\alpha} \enspace.
\]
For the number of nodes this yields
\[
\lvert K_Z(w) \rvert - 2 \leq 2 (j-1) \leq \frac{2}{\beta} \left( \frac{4 Z}{d(s_1, r_1)} \right)^\alpha \leq \frac{2}{\beta} \left( \frac{4 Z}{d} \right)^\alpha \enspace.
\]
\end{proof}

\begin{proof}[Proof of Proposition~\ref{prop:generalmetric:farlinks}]
Let $V(\L') = \{v_1, \ldots, v_{\bar{n}} \}$ such that $d(v_1, w) \leq d(v_2, w) \leq \ldots \leq d(v_{\bar{n}}, w)$. Note that for all $Z > 0$ we have $K_Z(w) = \{1, \ldots, x\}$ for some $x \in \NN$ by this definition.

For $j \leq \log \bar{n} + 1$ let $R_k = [2^k] \setminus [2^{k-1}] = \{2^{k-1}+1,\ldots,2^k\}$. Furthermore, let $Z_j$ be defined as $Z_j = \min_{i \in R_j} d(v_i, w) = d(v_{2^{j-1}+1},w)$.
These definitions yield:
\[
\sum_{i=1}^{\bar{n}} \frac{d^\alpha}{d(v_i, w)^\alpha} \leq \sum_{j=1}^{\log \bar{n} + 1} \sum_{i \in R_j} \frac{d^\alpha}{d(v_i, w)^\alpha} \leq d^\alpha \sum_{j=1}^{\log \bar{n} + 1} \frac{|R_j|}{Z_j^\alpha}
\]
As the distances are increasing, we have $Z_j \geq d(v_i, w)$ for all $i \leq 2^{j-1}$. In other words: $[2^{j-1}] \subseteq K_{Z_j}(w)$.

Now, we apply \lref[Fact]{fact:numberofnodesinneighborhood} on $|K_{Z_j}(w)|$, which gives
\[
2^{j-1} = |[2^{j-1}]| \leq |K_{Z_j}(w)| \leq \frac{2}{\beta} \left( \frac{4 Z_j}{d} \right)^\alpha + 2\enspace.
\]
Consequently, we have
\[
Z_j^\alpha \geq (2^{j-2}-1) \beta \left( \frac{d}{4} \right)^\alpha\enspace.
\]
Therefore, we can bound the sum by
\[
d^\alpha \sum_{j=1}^{\log \bar{n} + 1} \frac{2^{j-1}}{(2^{j-2}-1) \left( \frac{d}{4} \right)^\alpha} \leq \sum_{j=1}^{\log \bar{n} + 1} 4^{\alpha + 1} 
= \O( \log \lvert \L \rvert )
\]
This proves the claim.

\end{proof}
\end{appendix}

\end{document}